\newtheorem{lemma}{Lemma}
\newtheorem{theorem}{Theorem}
\newcommand{\g}{\mathfrak{g}}
\newcommand{\rme}{\mathrm{e}}
\newcommand{\rmd}{\mathrm{d}}
\begin{document}

\begin{center}
\vspace{24pt}
{ \large \bf Trees and spatial topology change in CDT}

\vspace{30pt}

{\sl J. Ambj\o rn}$\,^{a,b}$
and {\sl T.G. Budd}$\,^{a}$

\vspace{48pt}
{\footnotesize

$^a$~The Niels Bohr Institute, Copenhagen University\\
Blegdamsvej 17, DK-2100 Copenhagen \O , Denmark.\\
{ email: ambjorn@nbi.dk, budd@nbi.dk}\\

\vspace{10pt}
$^b$~Institute for Mathematics, Astrophysics and Particle Physics (IMAPP)\\ 
Radbaud University Nijmegen, Heyendaalseweg 135,
6525 AJ, Nijmegen, The Netherlands 
}
\vspace{36pt}
\end{center}


\begin{center}
{\bf Abstract}
\end{center}
\noindent
Generalized causal dynamical triangulations (generalized CDT) is a model of two-dimensional quantum gravity in which a limited number of spatial topology changes is allowed to occur.
We solve the model at the discretized level using bijections between quadrangulations and trees.
In the continuum limit (scaling limit) the amplitudes are shown to agree with known formulas and explicit expressions are obtained for loop propagators and two-point functions.
It is shown that from a combinatorial point of view generalized CDT can be viewed as the scaling limit of planar maps with a finite number of faces and we determine the distance function on this ensemble 
of planar maps.  
Finally, the relation with planar maps is used to illuminate a mysterious identity of certain continuum cylinder amplitudes.

\vspace{12pt}
\noindent

\vspace{24pt}
\noindent
PACS: 04.60.Ds, 04.60.Kz, 04.06.Nc, 04.62.+v.\\
Keywords: quantum gravity, lower dimensional models, lattice models.

\newpage

\section{Introduction}\label{sec:intro}

Two-dimensional quantum gravity has been an important topic in  theoretical 
physics for a long time.  String theory {\it is} two-dimensional 
quantum gravity coupled to certain conformal field theories in its
simplest perturbative formulation. When the central charge $c$ of a 
conformal field theory is less than one the models correspond to 
non-critical string theories, and it is possible to solve 
certain aspects of the gravity-matter system analytically.
Also, for $c< 1$ it is possible to provide a path integral 
regularization of these quantum theories. In this regularization
one performs the integration over 2d geometries by summing over 
equilateral triangulations (so-called \emph{dynamical triangulations} (DT)), 
eventually recovering the continuum limit by taking the length $\epsilon$
of the links to zero. Remarkably, a class of these regularized theories 
can be solved analytically, even for $\epsilon > 0$,  using 
combinatorial techniques, either by directly counting certain graphs or 
by using so-called matrix models. The outcome of this 
has been a beautiful Wilsonian picture where one has universality:
the continuum limit is to a large extent independent of the details
of the regularization. It does not really matter if the starting point is 
triangulations or one uses  quadrangulations or higher order 
polygons  in an arbitrary  combination,
as long as the weights of these are positive (see e.g.\ \cite{ambjorn_quantum_1997}, chapter
4, for a review). Thus one has an 
infinite dimensional coupling constant space, the coupling constants
being the relative weights of various types of polygons, and the 
critical surface where the continuum limit can be taken is a 
hyper-surface of finite co-dimension. On this hyper-surface one 
obtains ``pure'' 2d Euclidean quantum gravity. If one allows negative
weights for some polygons one can flow to new continuum theories
describing 2d Euclidean quantum gravity coupled to various 
conformal matter theories, and if one allows for various ``flavors''
to be attached to the polygons and a local interaction
between these, one can obtain all mininal, rational conformal 
field theories coupled to 2d Euclidean quantum gravity in the 
continuum limit. 

These 2d discretized models can be generalized to higher dimensions 
\cite{ambjorn_four-dimensional_1992,agishtein_simulations_1992}.
However it has so far not been possible to find a \emph{continuum limit}
which can be viewed as higher dimensional quantum gravity. This 
failure triggered an attempt to define a new class of regularized 
models where the sums over the piecewise linear geometries were
first carried out in spacetimes with Lorentzian signature and 
local causality was imposed. This class of piecewise linear
geometries  was denoted \emph{causal dynamical triangulations} (CDT) 
\cite{ambjorn_non-perturbative_1998,ambjorn_dynamically_2001}. 
When rotating back to Euclidean signature, one is effectively 
summing over triangulations where there is a \emph{proper-time foliation}.
It seems that this class of models has an interesting continuum limit
both in three and four dimensions (for a review see \cite{ambjorn_nonperturbative_2012}), 
but until now this has only been 
investigated using computer simulations (in three dimensions somewhat 
related models have been looked at analytically). Here we will concentrate
on 2d where the regularized model can be solved analytically using 
combinatorial methods. More specifically we will consider a 
model known as \emph{generalized CDT}, which interpolates between the CDT 
and DT models \cite{ambjorn_putting_2007,ambjorn_string_2008}. 
In the generalized CDT model one starts out with 
space being connected, i.e. having the topology of $S^1$,  and
as a function of proper time one allows it to split into 
a finite number of $S^1$ components.
Using recent combinatorial results we will  
solve the discretized  model and show how one 
can obtain the scaling or continuum limit of the model.
Further we will discuss  how this limit relates to the standard 
CDT and DT limits. 
To simplify the discussion we use a model which at the discretized
level is described by quadrangulations, rather than triangulations.
(in section \ref{sec:triangulations} we show how the results can be generalized
to triangulations). 

The rest of the article is organized as follows:
in section \ref{sec:bijections} we review how quadrangulations of the sphere
are related to labeled planar trees in the case of DT and to unlabeled planar trees
in the case of CDT. In section \ref{sec:gencdt} we show how one can use these trees to study a discrete version of generalized CDT and how to find its scaling limit.
A more detailed counting of labeled trees in section \ref{sec:timedep} allows us to study proper time depences in generalized CDT. 
In section \ref{sec:planarmaps} it is shown that generalized CDT can also be interpreted as a scaling limit of random planar maps for which the number of faces is conditioned to remain finite.

There exists an intriguing identity  
between certain cylinder amplitudes in the scaling limit, first discussed 
in the context of DT by Kawai and Ishibashi \cite{ishibashi_string_1994} and later 
in the context of generalized CDT in \cite{ambjorn_string_2008}. Kawai and 
Ishabashi related the identities to a Virasoro algebra and an 
underlying conformal invariance, but in the context of  
generalized CDT the identities appeared quite mysterious. However,
we will show that they are even valid at the discretized 
level and can be understood as a bijection between sets of 
quadrangulations defining the cylinder amplitudes in question.
This is discussed in section \ref{sec:loops}.

Finally, in section \ref{sec:triangulations} it is shown how some of the quadrangulation results can be generalized to triangulations.

\section{Bijections}\label{sec:bijections}
\subsection{Definitions}\label{sec:definitions}

In the following we will make extensive use of planar trees, quadrangulations, triangulations and more general tilings of the 2-sphere.
In order to facilitate the discussion it is useful to recognize these objects as subclasses of the more general notion of planar maps.

An \emph{embedded planar graph} is a multigraph, i.e. a graph in which edges are allowed to begin and end at the same vertex and in which multiple edges are allowed between pairs of vertices, embedded in the 2-sphere without crossing edges.
A \emph{planar map} is a connected embedded planar graph. 
Two planar maps are considered equivalent if they can be continuously deformed into each other. 
An equivalence class of planar maps can be described purely combinatorically, e.g. by labeling the vertices and writing down for each vertex an (clockwise) ordered list of its neighbors in the graph.  
The connected components of the complement of an embedded planar graph in the sphere are called the \emph{faces} of the graph.
In the case of a planar map all faces are topological disks and the number of edges bounding a face is called the \emph{degree} of the face.
A face of degree $d$ also has $d$ corners, where by a \emph{corner} we mean a small sector around a vertex bounded by two consecutive edges of the face.
Notice that a single vertex may appear more than once as a corner of a face.

When it comes to counting it is often convenient to deal with objects with no internal symmetries.
In the case of planar maps, which may have a non-trivial automorphism group, the symmetry can be explicitly broken by \emph{rooting} the map, which means that one marks an oriented edge, the \emph{root edge}.
The face directly to the left of the root edge is called the \emph{root face} and its corner associated with the end of the root edge is the \emph{root corner}.
Clearly selecting a root edge is equivalent to selecting a root corner, and therefore either type of root may be used interchangeably.  

Various classes of planar maps will be of interest to us. 
First of all, \emph{triangulations} and \emph{quadrangulations}, respectively, are planar maps for which each face has degree 3 and 4.
More generally, a \emph{$p$-angulation}, $p\geq 2$ is a planar map for which each face has degree $p$.
A \emph{$p$-angulation with $b$ boundaries of length $l_i$}, $i=1,\ldots,b$, is a planar map with $b$ distinguished faces with degree $l_i$ and all other faces having degree $p$.
Finally, a \emph{planar tree} is a planar map with only one face. 
The anti-clockwise ordered list of corners of the single face of a planar tree, starting at the root corner in the case of a \emph{rooted planar tree}, is often called the \emph{contour} of the tree.
It corresponds to the (periodic) list of vertices one encounters when walking around the tree in a \emph{clockwise} direction. 

In the following we will often study \emph{pointed quadrangulations}, which are quadrangulations with a marked vertex, called the \emph{origin}.
It is convenient to adopt a slightly different way of rooting a pointed quadrangulation, namely a \emph{rooted pointed quadrangulation} is a pointed quadrangulation with a marked (unoriented) edge.
As will become clear later, the marked vertex already induces a natural orientation on the edges and therefore a choice of orientation of the root edge would lead to a two-fold redundancy.

\subsection{Cori--Vauquelin--Schaeffer bijection}\label{sec:schaeffer}

The Cori--Vauquelin--Schaeffer bijection relates quadrangulations of the sphere with a marked vertex to planar trees with a labelling. Let us briefly describe the map and its inverse. For details and proofs we refer the reader to \cite{schaeffer_conjugaison_1998,chassaing_random_2004,gall_scaling_2011}, or to section \ref{sec:planarmaps} where we discuss and prove a more general bijection.

\begin{figure}[t]
{\centering
\includegraphics[width=\linewidth]{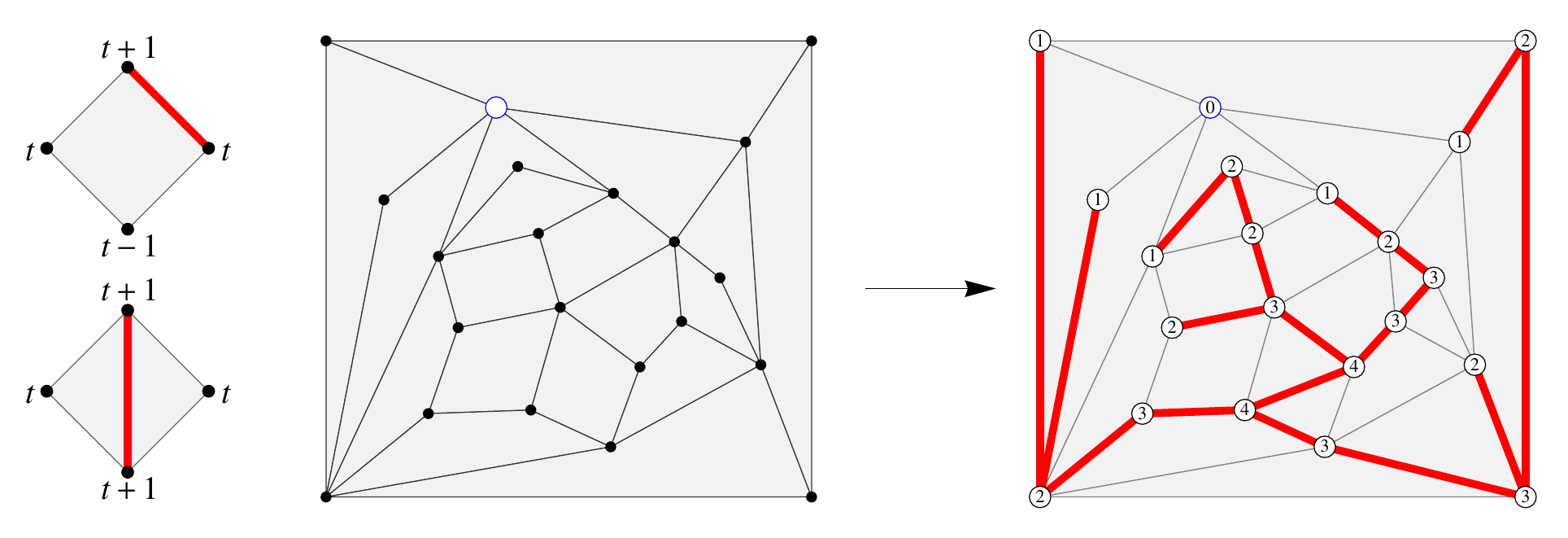}}
\vspace{-0.6cm}

\hspace{1cm}(a)\hspace{3.9cm}(b)\hspace{6.5cm}(c)
\caption{To every face of the quadrangulation one associates a coloured edge depending on the labeling. The coloured edges combine into a tree connecting all vertices except the origin.}%
\label{fig:quad0}
\end{figure}

Given a pointed quadrangulation of the sphere with $N$ faces, we can label the vertices by their distance to the origin along the edges of the quadrangulation.
The set of vertices naturally partitions into those with even respectively odd labels.
As a consequence of the faces having an even number of sides, this partition turns the edge graph into a bipartite graph, i.e. vertices with even label only connect to vertices with odd label and vice versa. 
Hence, each edge connects vertices with labels differing by exactly one.
Taking into account the labeling, two types of faces occur in the quadrangulation, \emph{simple} faces with labels $(t-1,t,t+1,t)$ and \emph{confluent} faces with labels $(t+1,t,t+1,t)$ in cyclic order (see figure \ref{fig:quad0}a and \cite{schaeffer_conjugaison_1998,chassaing_random_2004}).
A graph is drawn on the sphere by colouring the diagonal of each confluent face and the side of each simple face according to the prescription shown in figure \ref{fig:quad0}a. 
The resulting graph is a tree (with $N$ edges) containing all the vertices of the quadrangulation except for the origin. 
This is a consequence of the following lemma, which we formulate more generally than necessary here, since we will be reusing this result in section \ref{sec:planarmaps}.
\begin{lemma}\label{thm:qtopm}
Let $Q$ be a quadrangulation with $N$ faces and integer labels on its vertices, such that the labels differ by exactly one along the edges. 
Then the embedded planar graph $\mathcal{G}$ resulting from the prescription in figure \ref{fig:quad0}a is connected and therefore a planar map.
Moreover, each face of $\mathcal{G}$ contains in its interior exactly one vertex of $Q$.
These vertices are exactly the local minima of the labeling, i.e. vertices whose labels are smaller or equal to all the labels of their neighbors.
\end{lemma}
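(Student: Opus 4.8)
The plan is to isolate the only local input coming from figure~\ref{fig:quad0}a, reduce the two global assertions (connectivity, and the one-to-one correspondence between faces and minima) to a single counting statement via Euler's formula, and then concentrate all the real work on that statement.

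First I record the local rule in a form convenient for the whole argument: \emph{in every face the coloured edge joins two corners whose labels are not the minimal label occurring on that face}. Indeed, the coloured side of a simple face $(t-1,t,t+1,t)$ avoids the unique corner of label $t-1$, and the coloured diagonal of a confluent face $(t+1,t,t+1,t)$ joins the two corners of label $t+1$ and so avoids both corners of label $t$. From this I read off exactly which vertices are touched by $\mathcal{G}$. A vertex $v$ of label $\ell$ is a local minimum precisely when all its neighbours have label $\ell+1$; in that case each incident face is either a simple face with $v$ at its $(t-1)$-corner or a confluent face with $v$ at a $t$-corner, so $v$ realises the face-minimum in every incident face and is avoided by all coloured edges. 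Conversely, if $v$ has a neighbour of label $\ell-1$, then inspecting the two face types along that edge shows $v$ to be an endpoint of the coloured edge of at least one incident face. Hence the vertex set of $\mathcal{G}$ is exactly the set of non-minima, and the local minima are precisely the vertices of $Q$ lying off $\mathcal{G}$, i.e.\ in the interiors of its complementary regions.

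Next I set up the count. A quadrangulation has $E=2N$ edges, hence $V=N+2$ vertices by Euler's formula, while $\mathcal{G}$ has one coloured edge per face of $Q$, thus $N$ edges, and $V-m$ vertices, where $m$ is the number of local minima. By the previous paragraph the $m$ minima are exactly the vertices of $Q$ in the region interiors. Writing $\kappa$ for the number of connected components of $\mathcal{G}$ and applying the Euler relation $V_{\mathcal{G}}-E_{\mathcal{G}}+F_{\mathcal{G}}=1+\kappa$ for a graph embedded in the sphere, I obtain $F_{\mathcal{G}}=\kappa+m-1$. The key observation is that the \emph{entire} lemma now follows from a single inequality: if every complementary region of $\mathcal{G}$ contains at least one minimum, then $m=\sum_{\text{regions}}(\#\text{minima})\ge F_{\mathcal{G}}=\kappa+m-1$, forcing $\kappa=1$, hence $F_{\mathcal{G}}=m$, and then the count of minima forces exactly one per region. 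Thus connectivity and the face/minimum bijection are both consequences of the lower bound ``each region contains a local minimum.''

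Proving this lower bound is the step I expect to be the main obstacle. The natural approach is a descent argument: orient each coloured edge towards smaller labels and show that, starting from any corner inside a region, one can descend within that region until reaching a local minimum, so that each region is the ``basin'' of at least one minimum. The delicate point is precisely the confluent faces, whose coloured diagonal joins two corners of equal label $t+1$; here descent by label alone stalls, and one must supplement it by the finer order in which corners are visited along the contour, checking that a descent path started inside a region neither crosses a coloured edge (a coloured edge meets the $Q$-edges only at $Q$-vertices, so it is not transversally crossed) nor escapes the region at such an equal-label diagonal. An equivalent and arguably cleaner route, which also makes connectivity manifest, is to construct the inverse map directly, reattaching each minimum inside its region; the well-definedness of this reconstruction is exactly the more general bijection that is set up and proved in section~\ref{sec:planarmaps}, so one option is to defer the hardest part to that later, stronger result.
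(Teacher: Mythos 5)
Your reduction is sound and in fact coincides with the paper's own strategy: the characterization of the vertices of $\mathcal{G}$ as the non-minima is argued exactly as in the paper, and your Euler-formula bookkeeping ($F_{\mathcal{G}}=\kappa+m-1$ against $F_{\mathcal{G}}\le m$) is the same counting step the paper performs. But you have correctly located the crux --- every complementary region of $\mathcal{G}$ contains at least one local minimum --- and then not proved it, and neither of your two proposed escapes closes the hole. The descent sketch does not work as stated: by your own first paragraph the local minima are \emph{not} vertices of $\mathcal{G}$, so a walk descending along coloured edges can never terminate at a minimum; if instead you descend along edges of $Q$, you must control how the path interacts with coloured edges (some of which are themselves sides of simple faces of $Q$, i.e.\ pieces of the region boundary), and that is exactly the ``delicate point'' you leave unresolved. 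The fallback of deferring to section \ref{sec:planarmaps} is circular within this paper: the map $\Psi$ of theorem \ref{thm:genbijection} is only well defined into $\mathcal{M}^{(l)}$ (i.e.\ lands on \emph{connected} maps) because of lemma \ref{thm:qtopm}, and the proof of that theorem opens by citing lemma \ref{thm:qtopm} for properties (i) and (ii); its verification of $\Xi\circ\Psi=\mathrm{id}$ also uses the face structure of $\Psi(Q)$ that the lemma establishes.

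The missing step has a short, purely local proof, and it is the heart of the paper's argument. In a face $\mathcal{F}$ of $\mathcal{G}$, pick a corner of minimal label $t$ and the coloured edge $e$ leaving that vertex with $\mathcal{F}$ on its left. Both endpoints of $e$ are corners of $\mathcal{F}$, so by minimality $e$ joins $t$ to $t$ or to $t+1$: in the first case $e$ is the diagonal of a confluent quadrangle with labels $(t,t-1,t,t-1)$, in the second it is the coloured side of a simple quadrangle $(t-1,t,t+1,t)$. Either way, by the orientation conventions of figure \ref{fig:quad0}a the quadrangle of $Q$ adjacent to $e$ on the side of $\mathcal{F}$ contains a vertex labeled $t-1$. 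That vertex cannot be a corner of $\mathcal{F}$, since its label lies below $t$, so it sits in the interior of $\mathcal{F}$; being off $\mathcal{G}$, it is a local minimum by your first paragraph. This one-step argument replaces the global descent you were attempting, and with it your counting paragraph finishes the proof exactly as in the paper.
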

\begin{proof}
First we show that a vertex of $Q$ is a vertex of $\mathcal{G}$ if and only if it is not a local minimum.
Since the colouring in figure \ref{fig:quad0}a stays away from the minimal labels in the quadrangles, it is clear that a local minimum is not a vertex of $\mathcal{G}$. 
Given a vertex $v$ of $Q$ labeled $t$ that is not a local minimum, one can find an edge running from $v$ to a vertex labeled $t-1$. 
By inspection of figure \ref{fig:quad0}a we see that the quadrangle directly to the right of this edge in any case gives rise to a coloured edge ending at $v$.

Let us now prove that each face $\mathcal{F}$ of the embedded planar graph contains a vertex that is a local minimum.
For a given face $\mathcal{F}$, select from its corners one which has smallest possible label $t$.
There must exist an edge leading away from this vertex and having $\mathcal{F}$ on its left-hand side. This edge ends at a vertex that is labeled either $t$ or $t+1$. 
In the former case the edge is the diagonal of a confluent quadrangle, while in the latter case it is adjacent to a simple quadrangle on the left.
As can be seen from figure \ref{fig:quad0}a, in either case the quadrangle contains a vertex with label $t-1$ lying to the left of the edge.
Since this vertex cannot be a corner of $\mathcal{F}$ it must lie in the interior of $\mathcal{F}$.
Hence, there must be at least one local minimum in the interior of $\mathcal{F}$.

Let $V$, $E$ and $F$ be the number of vertices, edges and faces of $\mathcal{G}$, respectively.
The number of vertices of $Q$ is $N+2$, which follows from the Euler's formula, and each vertex of $Q$ that is not a local minimum belongs to $\mathcal{G}$, therefore $V=N-N_{\mathrm{min}}+2$.
By construction $E=N$ and from the considerations above it follows that $F \leq N_{\mathrm{min}}$.
Therefore we have $V-E+F = F-N_{\mathrm{min}} +2 \leq 2$.
However, according to the Euler characteristic for embedded planar graphs we have $V-E+F=1+C$, where $C\geq 1$ is the number of connected components of $\mathcal{G}$.
The only solution is $C=1$ and $F = N_{\min}$.
We conclude that $\mathcal{G}$ is connected and that each face of $\mathcal{G}$ contains exactly one local minimum in its interior.
\end{proof}

Since the distance labeling has only one local minimum, namely the origin itself, application of the lemma shows that the coloured graph is a planar map with a single face, hence a planar tree.
Keeping the labels on the vertices one ends up with a \emph{well-labeled tree}, i.e. a planar tree with positive integers on its vertices, such that at least one vertex is labeled 1 and the labels differ by at most one along its edges.

The quadrangulation can be reconstructed from the well-labeled tree in the following way. 
First we add by hand a new vertex in the plane, which will be the origin.
Then we consider the \emph{contour}, as defined above, containing the $2N$ corners of the tree in clockwise order. 
For every corner in the contour we draw a new edge in the plane: if the corner is labeled 1 we connect it to the origin; otherwise we connect it to the first corner following it that has smaller label.
Up to deformations there is a unique way of drawing all these edges without crossings.
After deleting the tree we are left with the original quadrangulation embedded in the plane.
For a proof of these statements see theorem \ref{thm:genbijection} in section \ref{sec:planarmaps}.

The bijection can be extended to the rooted versions of the quadrangulations and trees described above. 
Recall that a rooted pointed quadrangulation is a pointed quadrangulation where one of the edges is marked (indicated by a double-sided arrow in figure \ref{fig:quad1}a).
Since the edges of the quadrangulation are in 1-to-1 correspondence with the corners in the contour of the tree, we obtain a distinguished corner which we take to be the \emph{root} of the tree (indicated by a dark arrow in figure \ref{fig:quad1}b).
In a rooted tree all edges have a natural orientation, i.e. pointing away from the root.
To each edge of the tree we associate a label $+$, $0$ or $-$, depending on whether the label increases, remains the same, or decreases along the edge (see figure \ref{fig:quad1}c).
These labels are sufficient to reconstruct the labels on the vertices, since by construction the minimal label on the vertices is fixed to be equal to one.
From these constructions it follows that rooted pointed quadrangulations with $N$ faces are in bijection with rooted planar trees with $N$ edges labeled by $+$,$0$,$-$'s (see \cite{chassaing_random_2004}, Theorem 4). 

\begin{figure}[t]
{\centering
\includegraphics[width=\linewidth]{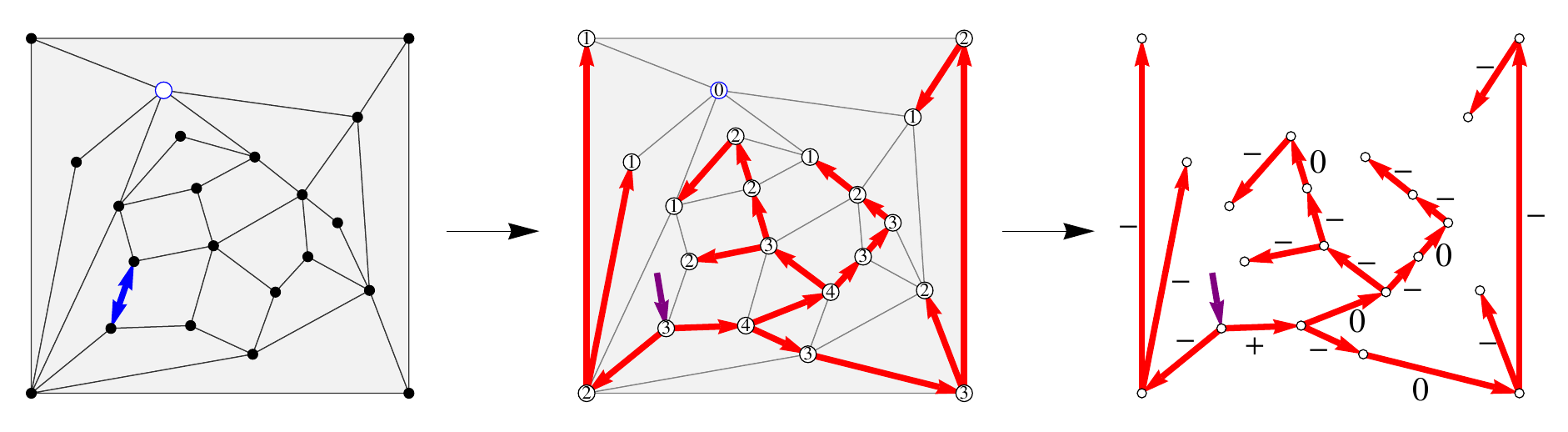}}
\vspace{-0.6cm}

\hspace{1.9cm}(a)\hspace{5.0cm}(b)\hspace{5.0cm}(c)
\caption{Rooted quadrangulations with an origin are in bijection with rooted planar trees labelled by $+$,$0$,$-$'s.}%
\label{fig:quad1}
\end{figure}

Notice that this bijection makes the counting of quadrangulations extremely simple. 
Rooted planar trees with $N$ edges are counted by the Catalan numbers 
\begin{equation}
C(N)=\frac{1}{N+1}\binom{2N}{N},
\end{equation}
while the number of labelings is simply $3^N$. 
The generating function $z^{(\ell)}(g)$ for the number of well-labeled rooted planar trees, and also for the number of rooted pointed quadrangulations, is therefore given by
\begin{equation}\label{eq:labeledgenfun}
z^{(\ell)}(g) = \sum_{N=0}^{\infty} 3^N C(N) g^N = \frac{1-\sqrt{1-12g}}{6g}.
\end{equation}
Since a quadrangulation of the sphere with $N$ faces has $2N$ edges and $N+2$ vertices, the micro-canonical partition function for unmarked quadrangulations is
\begin{equation}
Z(N)=\sum_{Q} \frac{1}{C_Q} = \frac{3^N}{2N(N+2)} C(N) = 
\frac{1}{2\sqrt{\pi}}N^{-7/2} 12^N (1+ \mathcal{O}(N^{-1})),
\end{equation}
where $C_Q$ is the order of the automorphism group of the quadrangulation $Q$.

\subsection{Causal triangulations}\label{sec:causal}

A similar bijection between causal triangulations and trees has been used in \cite{malyshev_two-dimensional_2001,krikun_phase_2012,durhuus_spectral_2010} and earlier in a slightly different form in \cite{di_francesco_integrable_2000}.
In analogy with the quadrangulations in the previous section we can define causal triangulations, which were introduced in \cite{ambjorn_non-perturbative_1998}, in the following way.
Consider a triangulation of the sphere with a marked vertex (the black point in figure \ref{fig:cdt0}), which we call the origin. 
We interpret the labeling of the vertices that arises from the distance to the origin as a time function, the CDT time.
The edges of the triangulation come in two types: \emph{spacelike} edges connecting vertices with identical labels and \emph{timelike} edges connecting vertices with different labels (dashed resp. solid edges in figure \ref{fig:cdt0}).
A triangulation with origin is a \emph{causal triangulation} when the graph consisting of only the spacelike edges is a disjoint union of cycles and there is exactly one vertex with maximal label.
In other words, the spatial topology as function of CDT time is fixed to be $S^1$.

\begin{figure}[t]
\centering
\subfloat[]{\centering\includegraphics[width=2.8cm]{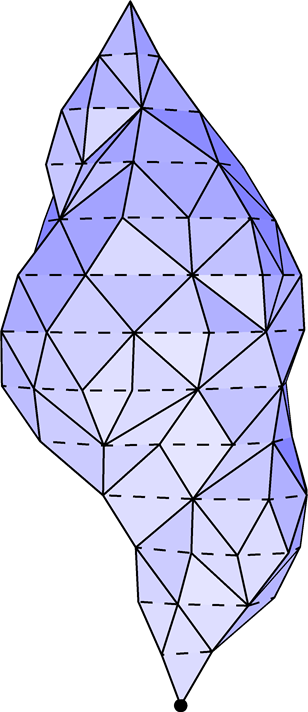}\label{fig:cdt0}}\hspace{0.3cm}
\subfloat[]{\centering\includegraphics[width=2.8cm]{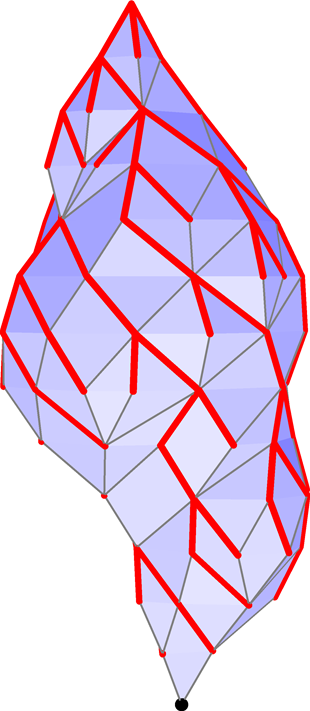}\label{fig:cdt1}}\hspace{0.3cm}
\subfloat[]{\centering\includegraphics[width=5.0cm]{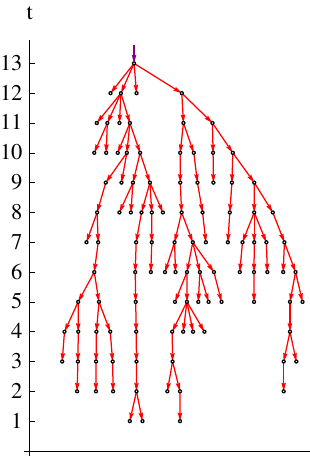}\label{fig:cdttree}}%
\caption{From (a) causal triangulations to (b) causal quadrangulations to (c) unlabeled planar trees.}%
\label{fig:cdt}
\end{figure}

Since every triangle in a causal triangulation is bordered by exactly one spacelike edge, there is a canonical pairing of triangles sharing a spacelike edge.
By joining all these pairs of triangles into faces, in other words, by deleting the spacelike edges, we end up with a quadrangulation of the sphere (figure \ref{fig:cdt1}). 
Notice that the removal of the spacelike edges has no effect on the distance labelling, because by construction they connect vertices with indentical labels.
Therefore the CDT time on the quadrangulation corresponds exactly to the labeling in the context of the Cori--Vauquelin--Schaeffer bijection.
The class of quadrangulations, which we call \emph{causal quadrangulations}, arising from this construction is easily seen to be characterized by the presence of a unique local maximum of the labeling.

If we root the pointed quadrangulation at one of the edges incident to the maximal vertex and apply Schaeffer's prescription, we end up with a rooted planar tree with all edges labeled by $-$'s (figure \ref{fig:cdttree}).
A direct consequence is that rooted causal quadrangulations with $N$ faces, and also rooted causal triangulations with $2N$ triangles, are counted by the Catalan numbers $C(N)$.
Their generating function is identical to the generating function $z^{(u)}(g)$ of unlabeled trees,
\begin{equation}\label{eq:unlabeledgenfun}
z^{(u)}(g) = \sum_{N=0}^{\infty} C(N)~g^N = \frac{1-\sqrt{1-4g}}{2g}.
\end{equation}
Later we will see how adding a coupling associated with the labeling allows us to get causal quadrangulations and unrestricted quadrangulations as special cases of a more general model of random trees.
But first let us show how one can extract more non-trivial scaling information from these representations by allowing quadrangulations with boundaries.

\subsection{Quadrangulations with a boundary}\label{sec:bijboundary}

As is shown in \cite{bouttier_distance_2009,bettinelli_scaling_2011,curien_uniform_2012} the Cori--Vauquelin--Schaeffer bijection extends in a natural way to quadrangulations with a boundary.
According to our definition in section \ref{sec:definitions} these are represented by planar maps of which all but one of the faces have degree 4.
Quadrangulations with one boundary necessarily have even boundary length, which we denote by $2l$.
An example of a quadrangulation with a boundary of length $28$ is shown in figure \ref{fig:boundary0}.
It is convenient to root the quadrangulation by selecting a corner of the boundary face (indicated by an arrow in figure \ref{fig:boundary0}).
As before, one of the vertices is marked as the origin, which may lie on the boundary. 

\begin{figure}[t!]
\centering
\subfloat[]{
\centering
\includegraphics[width=0.47\textwidth]{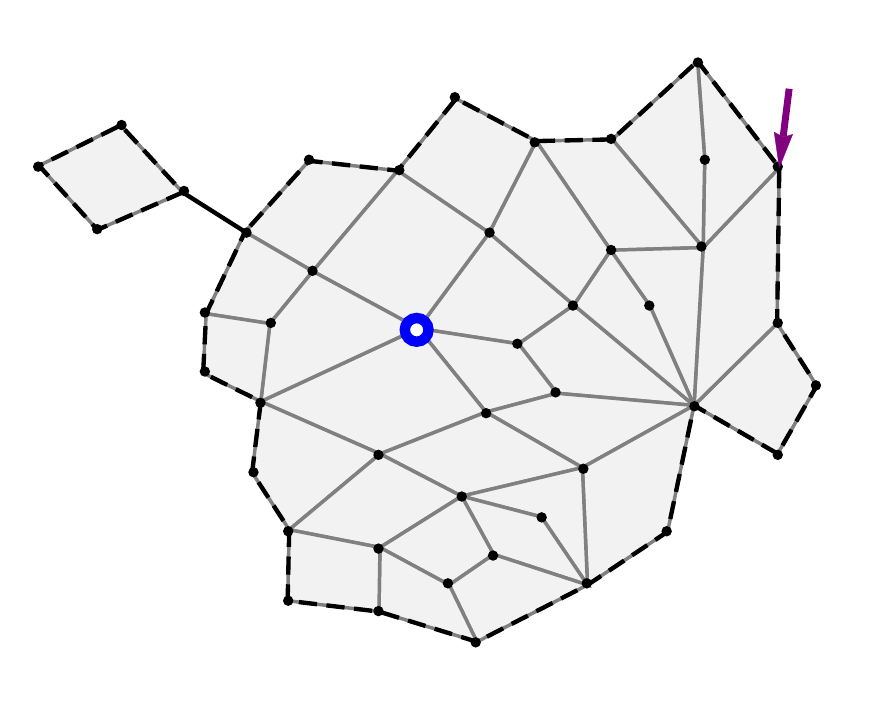}
\label{fig:boundary0}
}
\subfloat[]{
\centering
\includegraphics[width=0.47\textwidth]{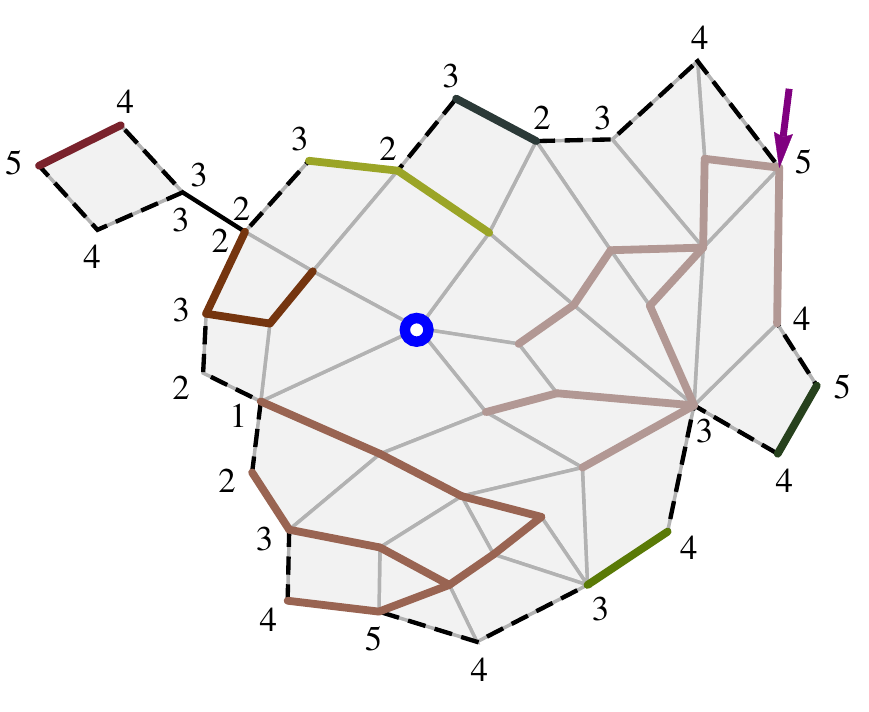}
\label{fig:boundary1}
}\\
\vspace{-0.6cm}
\subfloat[]{
\centering
\includegraphics[width=0.6\textwidth]{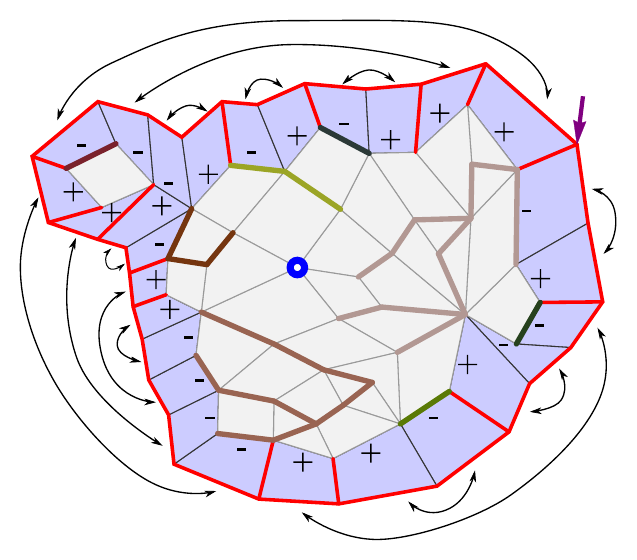}
\label{fig:boundary3}
}\\
\vspace{-0.6cm}
\subfloat[]{
\centering
\includegraphics[width=0.8\textwidth]{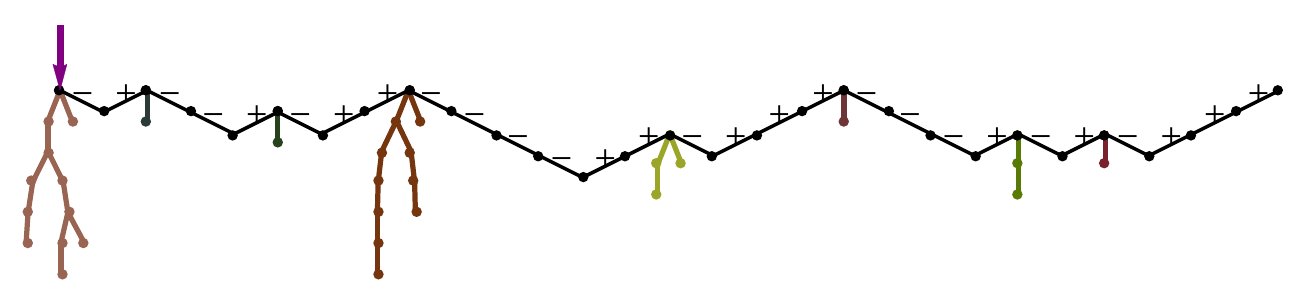}
\label{fig:boundary2}
}
\caption{Quadrangulations with a boundary and an origin are in bijection with sequences of $+$'s and $-$'s with a (possibly empty) labeled planar tree growing from each end-point of a $+$-edge.}%
\label{fig:boundary}
\end{figure}

Applying Schaeffer's prescription to the distance labeling we obtain a \emph{forest} $\mathcal{F}$, i.e. a set of disjoint trees (figure \ref{fig:boundary1}), instead of a single tree in the no-boundary case.
Let us orient the boundary of the quadrangulation in a clockwise direction.
Then to each boundary edge we can assign a $+$ or $-$ according to whether the label increases or decreases along the edge.
It turns out that each tree contains exactly one vertex that is the end-point of a $+$-edge.
This was shown in \cite{bouttier_distance_2009}, section 3.1, using a general bijection between planar maps and so-called \emph{labeled mobiles}.
To keep the discussion self-contained, let us present here an alternative argument based on the Cori--Vauquelin--Schaeffer bijection alone.

We turn the quadrangulation with boundary into a quadrangulation of the 2-sphere by quadrangulating the boundary face, while making sure that the distance labeling is unaffected.
A convenient method to do this is shown in figure \ref{fig:boundary3}.
First an annulus consisting of $2l$ quadrangles is glued to the boundary.
Notice that this does not change the distances from the original vertices to the origin and that the labels on the ``new boundary'' are increased by one compared to the ``old boundary''.
Now we only have to find a pairwise gluing of the new boundary edges in such a way that only vertices with identical label are identified.
A well-defined prescription is to repeatedly glue pairs consisting of a $-$-edge directly followed by a $+$-edge, until no boundary edges are left (see figure \ref{fig:boundary3} for an example).
Applying Schaeffer's prescription to the resulting pointed quadrangulation, we obtain a tree $\mathcal{T}$ which by construction contains the trees of $\mathcal{F}$ as subtrees.
The complement $\mathcal{T}\setminus\mathcal{F}$ of $\mathcal{F}$ in $\mathcal{T}$, as shown in red in figure \ref{fig:boundary3}, is a tree arising from the colouring of the quadrangles in the annulus. 
The quadrangles adjacent to the $-$-edges provide the colouring of the whole new boundary (after the gluing).
The quadrangles adjacent to the $+$-edges, on the other hand, lead to tree edges connecting the new boundary with all the ends of $+$-edges of the old boundary. 
Therefore only the ends of the $+$-edges are connected by the tree $\mathcal{T}\setminus\mathcal{F}$.  
Since $\mathcal{T}$ is connected and has no cycles, each tree of $\mathcal{F}$ must contain exactly one end of a $+$-edge.

One can root each tree of $\mathcal{F}$ at its distinguished vertex on the boundary by choosing the corner facing the external face.
In order to turn them into rooted well-labeled trees, we may shift the labels on the trees by an integer such that each of them has minimal label equal to one, or, equivalently, we only keep track of the $+$'s and $-$'s along the edges of the trees.
In general we may extract from a pointed quadrangulation rooted at its boundary of length $2l$ a sequence containing $l$ $+$'s and $l$ $-$'s, representing the change in the labeling along the boundary, and $l$ (possibly empty) rooted well-labeled trees which "grow" from the end of the $l$ $+$-edges (see figure \ref{fig:boundary2}).
It is not hard to see that one can reconstruct from this information the well-labeled tree $\mathcal{T}$ and therefore the full quadrangulation using the Cori--Vauquelin--Schaeffer bijection.
A precise proof is given in  \cite{bouttier_distance_2009,bettinelli_scaling_2011}.

Using this bijection one can easily write down a generating function $w(g,l)$ for the number of pointed quadrangulations with $N$ faces rooted on its boundary of length $2l$.
Since we can arrange the $l$ $+$'s in any way along the boundary of length $2l$, we get
\begin{equation}\label{eq:dtdiskl}
w(g,l) = \binom{2l}{l} z^{(\ell)} (g)^l,
\end{equation}
where $z^{(\ell)}(g)$ is the generating function for labeled trees 
(\ref{eq:labeledgenfun}).
In terms of the boundary cosmological constant $y$,
\begin{equation}\label{eq:dtdisk}
w(g,y)=\sum_{l=0}^{\infty} w(g,l) y^l = \frac{1}{\sqrt{1-4y z^{(\ell)}(g)}}.
\end{equation}

To obtain the continuum disk-function of 2d gravity we introduce a lattice spacing $\epsilon$, in terms of which we can define a continuum volume $V = N \epsilon^2$ and boundary length $L = l \epsilon$ with canonical dimension.
The Laplace transform $W_{\Lambda}(Y)$ of the continuum disk function $W_V(L)$ is obtained from the discrete disk function $w(g,y)$ by expanding around its critical point at $g_c=1/12$, $y_c=1/8$,
\begin{equation}
g=g_c(1-\Lambda \epsilon^2),\quad y=y_c(1-Y\epsilon).
\end{equation}
Plugging these into (\ref{eq:dtdisk}), we obtain
\begin{equation}
w(g,y) = \frac{1}{\sqrt{Y+\sqrt{\Lambda}}} \epsilon^{-1/2}(1+\mathcal{O}(\epsilon))=: W_{\Lambda}^{\text{m}}(Y) \epsilon^{-1/2}(1+\mathcal{O}(\epsilon)),
\end{equation}
where $W_{\Lambda}^{\text{m}}(Y)$ is the continuum disk function with a marked point.
The unmarked disk function (also known as the genus 0 sector of the Hartle--Hawking wave function of 2d gravity, see e.g. \cite{ambjorn_quantum_1997} section 4.4.2) is obtained by integrating $W_{\Lambda}^{\text{m}}$ with respect to $\Lambda$ (see also \cite{bouttier_distance_2009}, section 4.4),
\begin{equation}
W_{\Lambda}(Y) = \frac{2}{3}(Y-\frac{1}{2}\sqrt{\Lambda})\sqrt{Y+\sqrt{\Lambda}}.
\end{equation}

More generally one can consider any ensemble of labeled trees with generating function $z(g)$ and compute the correspondig disk function. 
Provided that the ensemble has a suscepitibility exponent $\gamma=1/2$, i.e. $z(g)$ is of the form
\begin{equation}
z(g_c) - z(g) \propto (g_c-g)^{1/2},
\end{equation}
we can define the continuum tree amplitude $Z$ through $z(g)=z(g_c)(1-Z\epsilon+\mathcal{O}(\epsilon^2))$.
The disk function is then simply 
\begin{equation}\label{eq:markeddisk}
W^{\mathrm{m}}(Y) = \frac{1}{\sqrt{Y+Z}}.
\end{equation}
In particular, one can consider the generating function $z^{(u)}(g)$ for the ensemble of trees where all labels are $-$'s, related to the causal quadrangulations.
The continuum amplitude is simply $Z^{(u)}=\sqrt{\Lambda}$, which is exactly the same as for the labeled trees.
Contrary to what one might have expected, restricting to causal quadrangulations does not change the continuum disk function. 

The explanation is that this particular disk function is not the one usually considered in the context of CDT.
To recover the latter one should consider the disk function where the boundary is restricted to be at constant distance from the origin.
In terms of the quadrangulations this is achieved by restricting the labels on the boundary to alternate between two consecutive integers, or, using the bijection, by fixing the boundary sequence to $(+,-,+,-,+,\ldots)$.
Since there is only one such sequence, we loose the combinatorial factor in (\ref{eq:dtdiskl}) and end up with the disk function
\begin{equation}
w_c(g,y)=\sum_{l=0}^{\infty} w(g,l) y^l = \frac{1}{1-y\, z(g)}.
\end{equation}
Its continuum counterpart is
\begin{equation}\label{eq:constdisk}
W^c_{\Lambda}(Y) = \frac{1}{Y+Z},
\end{equation}
which differs from the unrestricted disk function (\ref{eq:markeddisk}) only by an overall square root.
We reproduce the standard CDT disk function\footnote{To get the exact generating function for causal triangulations, one has to glue triangles to all $(-,+)$ pairs on the boundary. The generating function $w_{\mathrm{CDT}}(g,y)$ for the number of causal triangulations with a fixed number of triangles and a fixed boundary length is then $w_{\mathrm{CDT}}(g,y)=w_c(g^2,gy)$.} by setting $Z=\sqrt{\Lambda}$.

Notice that the overall square root in the marked DT disk function $W^{\mathrm{m}}_{\Lambda}(Y)$ compared to the CDT disk function $W^c_{\Lambda}(Y)$ is simply a consequence of the labeling describing a random walk on the boundary.
As we will see in section \ref{sec:timedep}, only once the random character of the labeling on the trees is taken into account, will the stark difference in scaling of DT compared to CDT be revealed.

In the next section we will introduce a new partition function for labeled trees, to which we can assign disk functions like above.

\section{From quadrangulations to generalized CDT}\label{sec:gencdt}

As mentioned earlier causal quadrangulations are characterized by a single local maximum of the labeling, while general quadrangulations can have any number.
This suggests a convenient way to interpolate between both models, by assigning a coupling $\g$ to each local maximum.
Then by construction setting $\g=0$ will lead to a model of closed causal quadrangulations and $\g=1$ to general quadrangulations.
If we interpret the labeling as a time function, we can also view the coupling $\g$ as a weight for the process of a universe splitting in two, i.e. of spatial topology change.

A model of spatial topology change in continuum CDT was studied in \cite{ambjorn_putting_2007,ambjorn_string_2008,ambjorn_matrix_2008} and was referred to as \emph{generalized CDT}.
To prevent a proliferation of baby universes in the continuum limit it was found that the coupling $\g$ should be scaled to zero with the lattice spacing $\epsilon$ as $\g=\g_s\epsilon^3$.
The continuum disk function could be calculated from a graphical consistency relation, leading to
\begin{equation}\label{eq:capfunction0}
W_{\lambda,\g_s}(X) = \frac{-(X^2-\lambda)+(X-\alpha)\sqrt{(X+\alpha)^2-2\g_s/\alpha}}{2\g_s},
\end{equation}
where $\lambda$ is the ``effective'' CDT cosmological constant (to be 
discussed in more detail below equation (\ref{eq:generalizedgscaling})) and 
$\alpha=\alpha(\g_s,\lambda)$ is given by the (largest) solution to
\begin{equation}\label{eq:alpha}
\alpha^3-\lambda\alpha+\g_s=0.
\end{equation}
This equation ensures that $X W_{\lambda,\g_s}(X) \to 1$ for $X \to \infty$. 
In the following we will show how we can derive $W_{\lambda,\g_s}(X)$  
by counting quadrangulations with a weight assigned to the local maxima of 
their labeling and then taking a suitable continuum limit.

However, before we continue, let us introduce a more convenient version of the disk amplitude in (\ref{eq:capfunction0}), for which one of the end points of the baby universes, i.e. one of the local maxima of the time function, is marked.
The term of order $\g_s^n$ in (\ref{eq:capfunction0}) corresponds to surfaces with $n+1$ such local maxima.
Hence we introduce the \emph{cap function}
\begin{equation}\label{eq:capfunction}
W^{\mathrm{cap}}_{\lambda,\g_s}(X) = \frac{1}{\sqrt{(X+\alpha)^2 - 2 \g_s/\alpha}},
\end{equation}
which satisfies
\begin{equation}
W^{\mathrm{cap}}_{\lambda,\g_s}(X) = 
\frac{\partial}{\partial \g_s}\left(\g_s  W_{\lambda,\g_s}(X) \right),
\end{equation}
where the derivative is taken while keeping $X$ and $\lambda$ fixed.

\begin{figure}[t]
\centering
\subfloat[]{
\centering
\includegraphics[height=5cm]{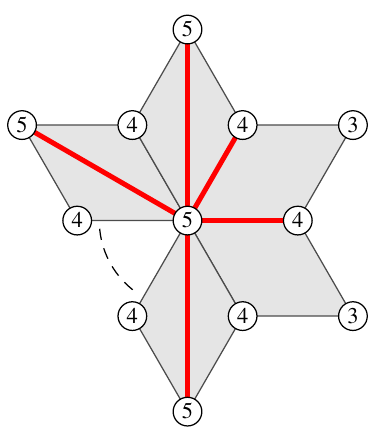}
\label{fig:maximum}
}
\subfloat[]{
\centering
\includegraphics[height=5cm]{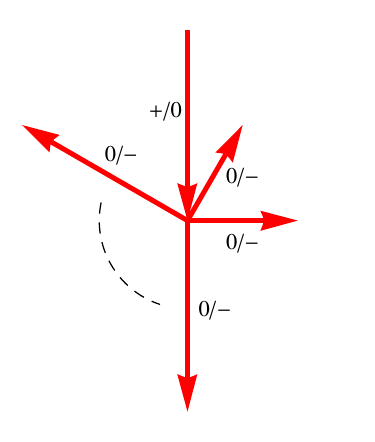}
\label{fig:maximum2}
}
\caption{A local maximum on the quadrangulation corresponds to a local maximum on the tree, i.e. to a vertex with a $+$- or $0$-edge coming in and all outgoing edges carrying a $0$ or a $-$.}%
\label{fig:maxfig}
\end{figure}

A vertex with label $t$ is a local maximum on the quadrangulation if all its neighbours, i.e. the vertices connected to it by an edge of the quadrangulation, are labeled $t-1$, as in figure \ref{fig:maximum}.
Equivalently, its neighbours in the associated well-labeled tree are labeled by $t$ or $t-1$.
Therefore, in terms of the rooted labeled trees we should associate a coupling $\g$ to every vertex in the tree which has a $+$- or $0$-edge coming in (or no edge in the case of the root vertex) and all outgoing edges carrying a $0$ or a $-$ (figure \ref{fig:maximum2}).
Let us denote by $z_0(g)=z_0(g,\g)$ the generating function for such trees.
Similarly we introduce the generating function $z_1(g)=z_1(g,\g)$ with the only difference that we do not assign a coupling $\g$ to the root vertex, even if there is local maximum there.
Both $z_0(g)$ and $z_1(g)$ therefore reduce to the generating function for rooted well-labeled trees $z^{(\ell)}(g)$ from (\ref{eq:labeledgenfun}) in the case $\g=1$.
We obtain recurrence relations for $z_0(g)$ and $z_1(g)$ by summing over the number and associated labels of the edges leaving the root,
\begin{align}
z_1 &= \sum_{k=0}^{\infty} \left(g\,z_1 +g\,z_0+g\,z_0\right)^k = \frac{1}{1-g\,z_1-2g\,z_0}\nonumber\\
z_0 &= \sum_{k=0}^{\infty} \left(g\,z_1 +g\,z_0+g\,z_0\right)^k + (\g-1)\sum_{k=0}^{\infty} \left(g\,z_1 +g\,z_0\right)^k \label{eq:zrecur}\\
&= z_1 + \frac{(\g-1)}{1-g\,z_1-g\,z_0}\nonumber
\end{align}
These combine into a single fourth-order polynomial for $z_1(g)$,
\begin{equation}\label{eq:zmin}
3 g^2\, z_1^4 - 4 g\, z_1^3 + (1+2g(1-2\g))z_1^2 - 1=0.
\end{equation}
The relevant solution, i.e. the one of the form $z_1(g)= 1+\mathcal{O}(g)$, is given by its smallest positive root.
For $\g=1$ the solution reduces to (\ref{eq:labeledgenfun}), while for $\g=0$ we get the generating function for unlabeled trees (\ref{eq:unlabeledgenfun}), $z_1(g)|_{\g=0} =  z^{(u)}(g)$.
The latter is not true for $z_0(g)$ since each labeled tree must have at least one local maximum, hence $z_0(g)=0$ for $\g=0$.

\begin{figure}[t]
\centering
\includegraphics[height=6.5cm]{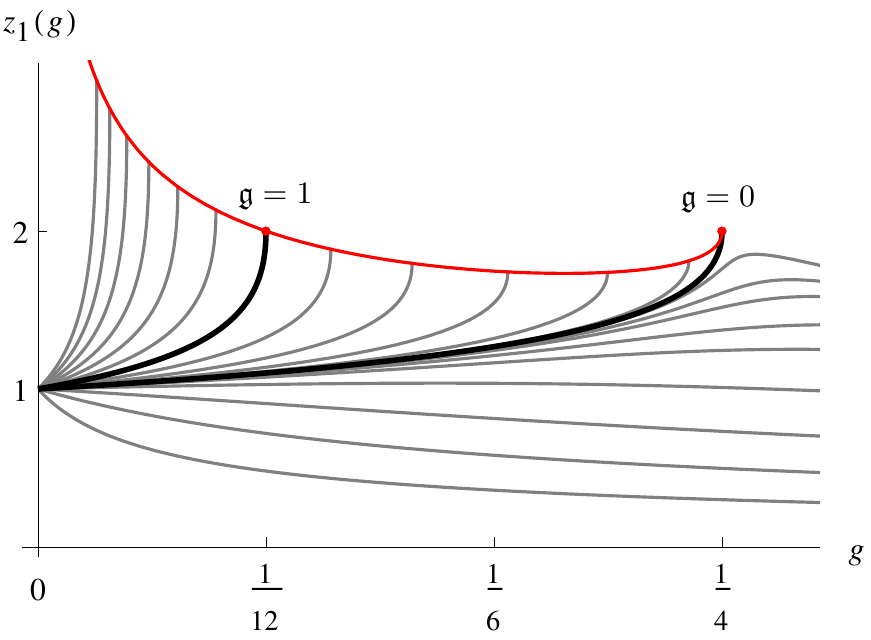}
\caption{The solution of eq.\ (\ref{eq:zmin}) for various values 
of $\g$ as a function of $g$. 
For each value of $\g\leq 0$ the curve ends at the critical line, defined 
by eq.\ (\ref{ja1}).}%
\label{fig:phaseplot}
\end{figure}

In figure \ref{fig:phaseplot} we have plotted the solutions for 
various other fixed values of $\g$.
For each $\g\geq 0$ there is a critical value $g_c(\g)$ at which $z_1(g)$ 
becomes non-analytic and at which a continuum limit can be taken. 
This critical value
$g_c(\g)$ is determined by the additional 
requirement that the derivative $z_1'(g)$
of $z_1(g)$ diverges, which 
leads to the equation
\begin{equation}\label{ja1}
3g_c^2 z_1^4-2 g_c z_1^3+1=0.
\end{equation}
This curve is plotted in red in figure \ref{fig:phaseplot}.
 
Unless $\g=0$ we obtain an infinite density of baby universes in the continuum limit and presumably we end up in the same universality class as pure DT.
This can be seen by calculating the expected number $\langle N_{\mathrm{max}}(\g)\rangle_N$ of local maxima for fixed large number $N$ of faces, which satisfies
\begin{equation}
\langle N_{\mathrm{max}}(\g)\rangle_N = \rho(\g) N + \mathcal{O}(N^0), \quad \rho(\g) = 2 \left(\frac{\g}{2}\right)^{2/3} + \mathcal{O}(\g), \quad \rho(1) = \frac{1}{2}.
\end{equation}

If we want to keep $\langle N_{\mathrm{max}}(\g)\rangle_N$ finite as $N\to\infty$ we should instead scale $\g$ to zero like $N^{-3/2}$.
In the grand-canonical setting this corresponds to scaling $\g = \g_s \epsilon^3$ with the lattice spacing $\epsilon$, as observed previously.
To take this continuum limit we again expand $g$ 
around its critical value $g_c(\g) = 1/4 - 3/4 (\g/2)^{2/3} + \mathcal{O}(\g)$,
\begin{equation}\label{eq:generalizedgscaling}
g = g_c(\g)(1-\Lambda \epsilon^2) = \frac{1}{4} \left( 1 - 3 \left(\frac{\g_s}{2}\right)^{2/3}\epsilon^2 - \Lambda\epsilon^2\right) = \frac{1}{4} (1 - \lambda \epsilon^2),
\end{equation}
where $\lambda = \Lambda + 3 \left(\frac{\g_s}{2}\right)^{2/3}$ is the 
``effective''  cosmological constant as it was introduced 
originally in generalized CDT \cite{ambjorn_putting_2007}. The original model
was formulated directly in the continuum limit and 
only later, using matrix models, was it understood that $\lambda$
is actually a sum of contributions coming from a ``genuine'' cosmological 
constant $\Lambda$ related to the ``area'' of graphs (triangulations, 
quadrangulations etc.) and a  (string) coupling constant $\g_s$ 
from splitting off baby 
universes \cite{ambjorn_new_2012}.  

Plugging (\ref{eq:generalizedgscaling}) and $z_1(g) = z_1(g_c)(1-Z_1\epsilon)$ into (\ref{eq:zmin}) we obtain the continuum equation
\begin{equation}\label{jax}
Z_1^3 - \lambda Z_1 + \g_s = 0.
\end{equation}
Notice that this is exactly equation (\ref{eq:alpha}) with $\alpha = Z_1$.

According to (\ref{eq:constdisk}) the disk function with constant distance to the origin associated to this ensemble of labeled trees is given by
\begin{equation}\label{eq:gencdtcup}
W_{\lambda,\g_s}^{\mathrm{cup}}(Y) = \frac{1}{Y+Z_1},
\end{equation}
which is different from (\ref{eq:capfunction}).
The difference is in the distance function used for the labeling. 
Here the labeling corresponds to the distance to the marked origin in the disk, while in (\ref{eq:capfunction}) the distance to the boundary is used and one of the local maxima of the time function is marked.
We refer to these two disk functions as the \emph{cup function} 
$W_{\lambda,\g_s}^{\mathrm{cup}}(Y)$ and 
the \emph{cap function} $W_{\lambda,\g_s}^{\mathrm{cap}}(X)$, 
see figure \ref{fig:capcup}. In order to obtain 
the cap function we will now study more general time-dependent amplitudes,
in particular the two-loop propagator   $G_{\lambda,\g_s}(X,Y;T)$ (figure \ref{fig:capcup}c) 
and the two-point function  $G_{\lambda,\g_s}(T)$. 

\begin{figure}[t]
\centering
\includegraphics[width=0.95\textwidth]{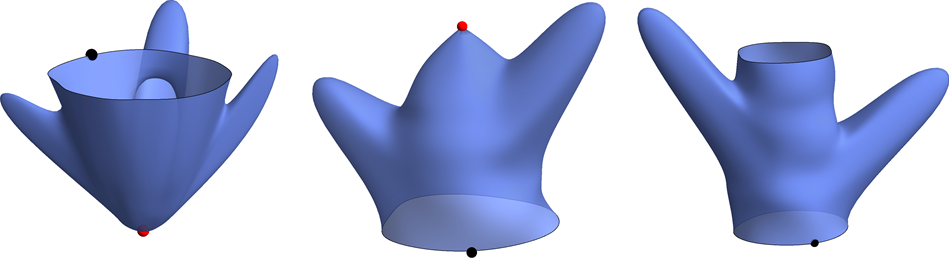}

(a)\hspace{4.4cm}(b)\hspace{4.4cm}(c)
\caption{(a) The cup function $W^{\mathrm{cup}}_{\lambda,\g_s}(Y)$, 
(b) the cap function $W^{\mathrm{cap}}_{\lambda,\g_s}(X)$ and 
(c) the propagator $G_{\lambda,\g_s}(X,Y;T)$. }%
\label{fig:capcup}
\end{figure}

\section{Time-dependent amplitudes}\label{sec:timedep}

The simple generating functions discussed in the previous section provide little information about the geometry of the quadrangulations they encode. 
In order to better understand the geometries and to fully reproduce the results of \cite{ambjorn_putting_2007} we need to keep track of the labeling on the trees in more detail.
Let us define the generating function $z_0(t)=z_0(t,g,\g)$ for rooted trees with positive integer labels on its vertices, such that the labels differ by at most one along the edges and such that the root is labeled $t$.
Then $z_0(t)-z_0(t-1)$ is the generating function for rooted well-labeled trees, i.e. labeled trees with minimal label equal to 1, with root labeled $t$.
According to the bijections in section \ref{sec:bijections}, $z_0(t)$ also gives a generating function for the quadrangulations of the sphere with the furthest end-point of its root edge at most a distance $t$ from the origin, again including a factor of $\g$ for each local maximum of the distance functions.
Likewise, the generating function where this distance is exactly $t$ is $z_0(t)-z_0(t-1)$. 

As in the previous section we introduce $z_1(t)=z_1(t,g,\g)$ for which no coupling $\g$ is assigned to the root vertex.
The recurrence relations (\ref{eq:zrecur}) straightforwardly generalize to
\begin{equation}\label{eq:ztrecur}
\begin{aligned}
z_1(t) & = \frac{1}{1-g\, z_1(t-1)-g\, z_0(t)-g\, z_0(t+1)}, \\
z_0(t) & = z_1(t) + \frac{(\g-1)}{1-g\, z_1(t-1)-g\, z_0(t)}
\end{aligned}
\end{equation}
for $t\geq 1$, subject to the boundary conditions $z_1(0)=0$ and $z_0(\infty)=z_0$.

Quite remarkably, (\ref{eq:ztrecur}) can be solved analytically using the technique outlined in \cite{bouttier_geodesic_2003,di_francesco_geodesic_2005}.
Expanding the generating functions around their limits as $t\to\infty$, which are given by the solution $z_1$ and $z_0$ to (\ref{eq:zrecur}), we find after a straightforward but tedious calculation
\begin{equation}\label{eq:ztsolution}
\begin{aligned}
z_1(t) & = z_1 \, \frac{1-\sigma^t}{1-\sigma^{t+1}} \, \frac{1-(1-\beta)\sigma-\beta \sigma^{t+3}}{1-(1-\beta)\sigma-\beta \sigma^{t+2}}, \\
z_0(t) &= z_0 \, \frac{1-\sigma^t}{1-(1-\beta)\sigma-\beta \sigma^{t+1}} \, \frac{(1-(1-\beta)\sigma)^2-\beta^2\sigma^{t+3}}{1-(1-\beta)\sigma-\beta \sigma^{t+2}},
\end{aligned}
\end{equation}
where $\beta$ and $\sigma$ are fixed in terms of $z_0$ and $z_1$ (hence in terms of $g$ and $\g$) through
\begin{equation}\label{eq:betasigma}
\begin{aligned}
g(1+\sigma)(1+\beta\sigma)z_1 - \sigma (1-2g\,z_0) &=0,\\
(1-\beta)\sigma - g(1+\sigma) z_1 + g(1-\sigma+2\beta\sigma)z_0 &=0.
\end{aligned}
\end{equation}
In particular, $\beta=0$ in the case of CDT ($\g=0$) and $\beta=1$ in the case of DT ($\g=1$).

To get to generalized CDT in the continuum limit, the time should be scaled canonically with the lattice spacing, i.e. $t = T/ \epsilon$. 
The scaling $\g = \g_s \epsilon^3$ and $g = 1/4 (1-\lambda \epsilon^2)$ implies that the parameters $\sigma$ and $\beta$ scale as $\sigma = 1 - 2\Sigma \epsilon$ and $\beta = B \epsilon$. 
According to (\ref{eq:betasigma}) the continuum parameters $\Sigma$ and $B$ are related to $\lambda$ and $\g_s$ through
\begin{equation}\label{eq:betasigmacont}
\begin{aligned}
\g_s - 2B(B^2 + 3 B \Sigma + 2\Sigma^2) &=0, \\
\lambda - 3 B^2 - 6B \Sigma - \Sigma^2 &=0.
\end{aligned}
\end{equation}
They can also be expressed in terms of $\g_s$ and $\alpha$ from equation (\ref{eq:alpha}),
\begin{equation}
\begin{aligned}
\Sigma &= \frac{1}{2}\sqrt{ 4\alpha^2 - 2 \g_s/\alpha } \quad \left(= \sqrt{\lambda} - \frac{3\g_s}{4\lambda} + \mathcal{O}(\g_s^2)\right), \\
B &= \alpha - \Sigma.
\end{aligned}
\end{equation}
We also note that, as $\g\to\infty$, $\alpha$ and $\Sigma$ grow as
\begin{equation}\label{ja7}
\alpha = \left(\frac{\g_s}{2}\right)^{1/3} +  \left(\frac{\Lambda}{3}\right)^{1/2}
+ O\left(\g_s^{-1/3}\right),~~~~
\Sigma = (3\Lambda)^{1/4} \left(\frac{\g_s}{2}\right)^{1/6} + 
O\left(\g_s^{-1/6}\right) 
\end{equation}
Finally, the solutions (\ref{eq:ztsolution}) scale as $z_1(t) = 2(1-Z_1(T)\epsilon)$ and $z_0(t)=2 Z_0(T) \epsilon^2$ with
\begin{equation}\label{eq:contztsol}
\begin{aligned}
Z_1(T) &= \alpha+\frac{\Sigma^2}{\sinh(\Sigma T)\left[\Sigma\cosh(\Sigma T)+\alpha\sinh(\Sigma T)\right]}, \\
Z_0(T) &= \frac{\g_s}{2\alpha}\left(1-\frac{\Sigma^2}{\left[\Sigma\cosh(\Sigma T)+\alpha\sinh(\Sigma T)\right]^2}\right).
\end{aligned}
\end{equation}
Since $z_0(t)-z_0(t-1)$ defines a discrete two-point function, the scaling limit of the two-point function is obtained by differentiating $Z_0(T)$ with respect to $T$,
\begin{equation}\label{ja10}
G_{\lambda,\g_s}(T) = \frac{\rmd Z_0(T)}{\rmd T} = \Sigma^3 \,\frac{\g_s}{\alpha}\,
\frac{\Sigma \sinh \Sigma T +\alpha \cosh \Sigma T}{ 
\Big(\Sigma \cosh \Sigma T +\alpha \sinh \Sigma T\Big)^3}.
\end{equation}
In the limit $\g_s \to 0$ we obtain (up to a factor $\g_s$ which is 
convention) the CDT result 
\begin{equation}\label{ja11}
G_{\lambda,\g_s=0}(T) \sim \rme^{-2\sqrt{\Lambda} T}. 
\end{equation}
In the limit $\g_s \to \infty$ we obtain (again up to a $\g_s$ factor)
\begin{equation}\label{ja12}
G_{\lambda,\g_s\to\infty}(T) \sim \Lambda^{3/4}
\frac{\cosh ( \Lambda^{1/4}  T')}{\Big(\sinh(
\Lambda^{1/4}  T')\Big)^3 }, ~~~~T' = 3^{1/4} \left(\frac{\g_s}{2}\right)^{1/6}T.
\end{equation}
Of course this limit does not exist unless we keep 
$T'$ finite in the scaling limit rather than $T$. 
Thus we are really discussing another scaling limit!
Recall that $T \sim t/\epsilon$ and $\g_s= \g/\epsilon^3$. Thus a 
finite $T'$ in the limit $\epsilon \to 0$ can be identified with a scaling 
limit where $t/\epsilon^{1/2}$ is finite for $\epsilon \to 0$. This is 
precisely the limit first discussed by Kawai et al. \cite{kawai_transfer_1993} where
the \emph{geodesic} distance scales anomalously with respect to volume,
leading to the Hausdorff dimension $d_h=4$ for 2d Euclidean 
quantum gravity and in the DT ensemble of random graphs. The 
two-point function (\ref{ja12}) was first calculated in \cite{ambjorn_scaling_1995}
(see also \cite{ambjorn_fractal_1995} and \cite{aoki_operator_1996} for more general formulas).
We note that the ``DT limit'' $\g_s \to\infty$ can be identified with 
the limit $\lambda /(\g_s/2)^{2/3} \to 3$ from above and is the limit 
where eq.\ (\ref{jax}) ceases to have a real positive solution $Z_1$.

\subsection{The disk functions and propagator}\label{subsec:boundaries}

Let us now turn to surfaces with one or more boundary components. 
We will first present a heuristic derivation of the disk function $W_{\lambda,\g_s}(X)$ in (\ref{eq:capfunction0}) using only the continuum tree amplitudes $Z_0(T)$ and $Z_1(T)$.
Below we will see that $W_{\lambda,\g_s}(X)$ can also be obtained directly in the scaling limit of quadrangulations with a boundary, by relating it to the propagator with a final boundary of length 0.

The cup function $(Y+Z_1(T))^{-1}$ constructed from $Z_1(T)$ 
corresponds to a path integral over all surfaces with a boundary at constant 
distance smaller than $T$ from the origin.
The corresponding cup function for fixed boundary length $L$ is its inverse Laplace transform and equals
\begin{equation}\label{eq:cupfunctionofLT}
\rme^{-L\,Z_1(T)}.
\end{equation}
We note that this function can be used to reveal 
a relation between $Z_0(T)$, $Z_1(T)$ and the disk function
$W_{\lambda,\g_s}(X)$ (which is the cap function without a mark in (\ref{eq:capfunction0}) and which we will determine by combinatorial methods below). 
\begin{figure}[t]
\centering
\includegraphics[height=5.5cm]{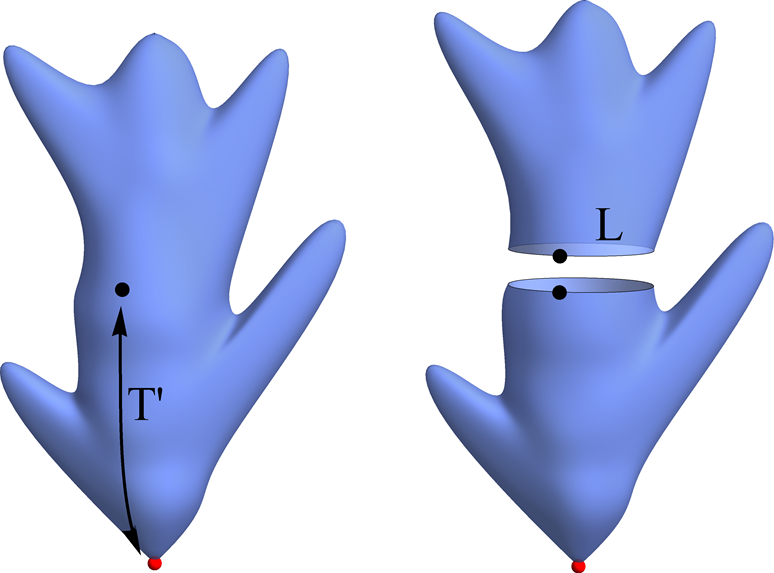}
\caption{The two-point function $Z_0(T)$ can be obtained by ``closing off'' the boundary of a cup function at a fixed distance $T'<T$ from the origin by the disk amplitude.}%
\label{figjan1}
\end{figure}
In figure \ref{figjan1} we have depicted the two-point function $Z_0(T)$, which is a path integral over surfaces with two marked points, i.e. the origin and the root, separated by a distance smaller than $T$.
This two-point function can be decomposed as follows: first one moves from the origin to the root at a distance $T'\leq T$ from the origin. This root is located on a connected curve of some length $L$ where all 
points have the same distance $T'$ to the origin. 
Cutting the surface along this curve leads to two disks, both with a mark on the boundary.
The bottom disk (see figure \ref{figjan1}) has an origin at a fixed distance $T'\leq T$ from the boundary and therefore its amplitude is given by (\ref{eq:cupfunctionofLT}).
The top disk has no mark in its interior and the labeling is determined by the distance to the boundary of length $L$, hence its amplitude corresponds to the inverse Laplace transform $\tilde{W}_{\lambda,\g_s}(L)$ of the disk function $W_{\lambda,\g_s}(X)$.
The two-point function $Z_0(T)$ is therefore obtained by combining these to disk amplitudes and integrating over $L$, i.e.
\begin{equation}\label{eq:Z0TfromZ1T}
Z_0(T) = \g_s\int_0^{\infty} \rmd L \, \tilde{W}_{\lambda,\g_s}(L) e^{-L\, Z_1(T)} = \g_s W_{\lambda,\g_s}(Z_1(T)).
\end{equation}
We have to include a factor of $\g_s$ simply because $Z_0(T)$ has a coupling $\g_s$ for each local maximum, while the term of order $\g_s^n$ in the disk function $W_{\lambda,\g_s}(X)$ corresponds to surfaces with $n+1$ local maxima (see the discussion above eq. (\ref{eq:capfunction})).
Since we have explicitly determined $Z_0(T)$ and $Z_1(T)$ we can verify using (\ref{eq:Z0TfromZ1T}) that $W_{\lambda,\g_s}(X)$ is indeed given by formula 
(\ref{eq:capfunction0}).

The \emph{propagator} $G_{\lambda,\g_s}(X,Y;T)$ corresponds to the path integral over surfaces with two boundaries, an initial boundary with boundary cosmological constant $X$ and a final boundary at a fixed distance $T$ from the initial boundary with boundary cosmological constant $Y$ (see figure \ref{fig:capcup}c).
In order to find an expression for $G_{\lambda,\g_s}(X,Y;T)$ in the scaling limit of generalized CDT a suitable ensemble of quadrangulations has to be chosen.
Let us define $\mathcal{Q}_{l_1,l_2,t}(N)$, $t\geq 1$, to be the set of quadrangulations $Q$ with $N$ quadrangles and two boundaries, the ``initial'' and ``final'' boundary, of lengths\footnote{Even though quadrangulations with two boundaries of \emph{odd} length exist, we require the boundary lengths to be even to keep them bipartite.} $2l_1$ and $2l_2$, rooted at the final boundary and satisfying the following conditions.
First of all we require that the initial boundary is \emph{non-degenerate} in the sense that all the corners of the boundary face belong to distinct vertices.
In order to state the second requirement a canonical labeling of the vertices is introduced.
If $t$ is odd, the vertices on the initial boundary that have even distance to the root are labeled $0$.
Otherwise the vertices with odd distance are labeled $0$.
All remaining vertices are labeled by their minimal distance to the set of vertices labeled $0$.
Given this canonical labeling, the vertices on the final boundary are required to have label $t$ or $t+1$, which is a discrete implementation of the fixed distance between the boundaries which we are after in the continuum. 
An example of a quadrangulation in $\mathcal{Q}_{l_1,l_2,t}(N)$ with $N=56$, $l_1=5$, $l_2=13$ and $t=3$ is shown in figure \ref{fig:propagator1}.

We can close off the initial boundary by gluing to it a disk 
constructed from $l_1$ simple faces, i.e. each of them labeled $(-1,0,1,0)$ (see figure \ref{fig:propagator2}).
This way we obtain a pointed quadrangulation, for which the origin is labeled $-1$ instead of the usual $0$, with a single boundary.
According to the bijection discussed in section \ref{sec:bijboundary} (and taking into account an overall shift of the labels by $-1$), this quadrangulation can be encoded by a forest of $l_2$ rooted trees, one ``growing'' from each vertex labeled $t+1$ on the final boundary (see figure \ref{fig:propagator2}).
The vertices of the trees carry non-negative labels, differing by at most one along the edges and having label $t+1$ on the root.
The quadrangulations under consideration have a special structure at the origin, namely that all $l_1$ faces around the origin are simple and that the $l_1$ edges starting at the origin end at distinct vertices.
In terms of the rooted trees with their edges directed away from the root, this condition translates exactly into the condition that the vertices labeled $0$ have no outgoing edges. 
The number of vertices labeled $0$ in the whole forest is $l_1$.

\begin{figure}[t]
\centering
\subfloat[]{
\centering
\includegraphics[width=7.5cm]{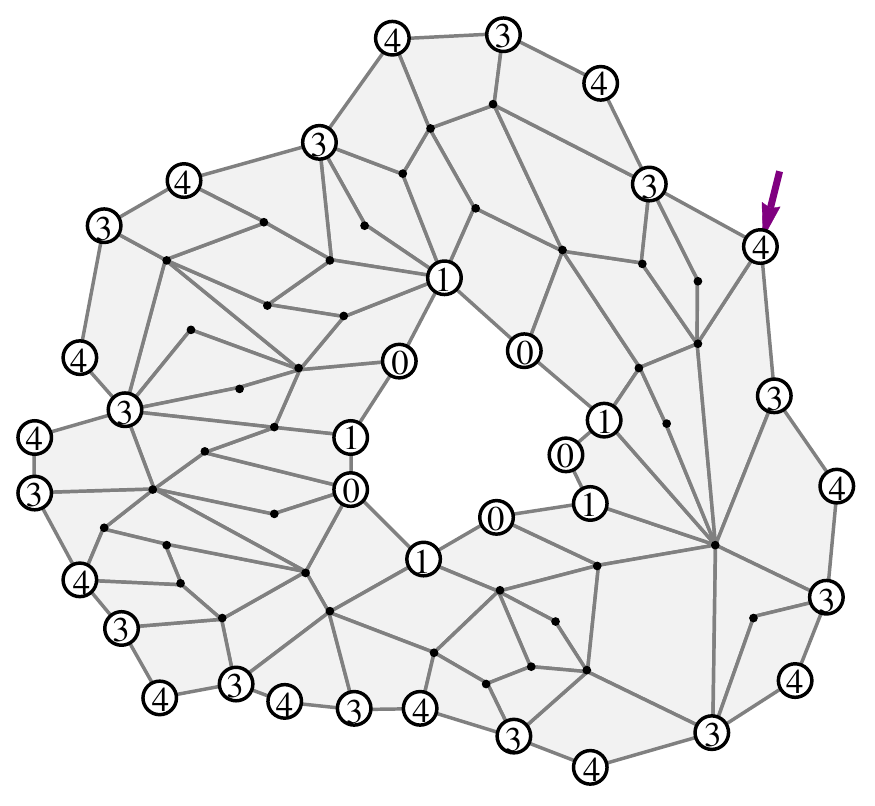}
\label{fig:propagator1}
}
\subfloat[]{
\centering
\includegraphics[width=7.5cm]{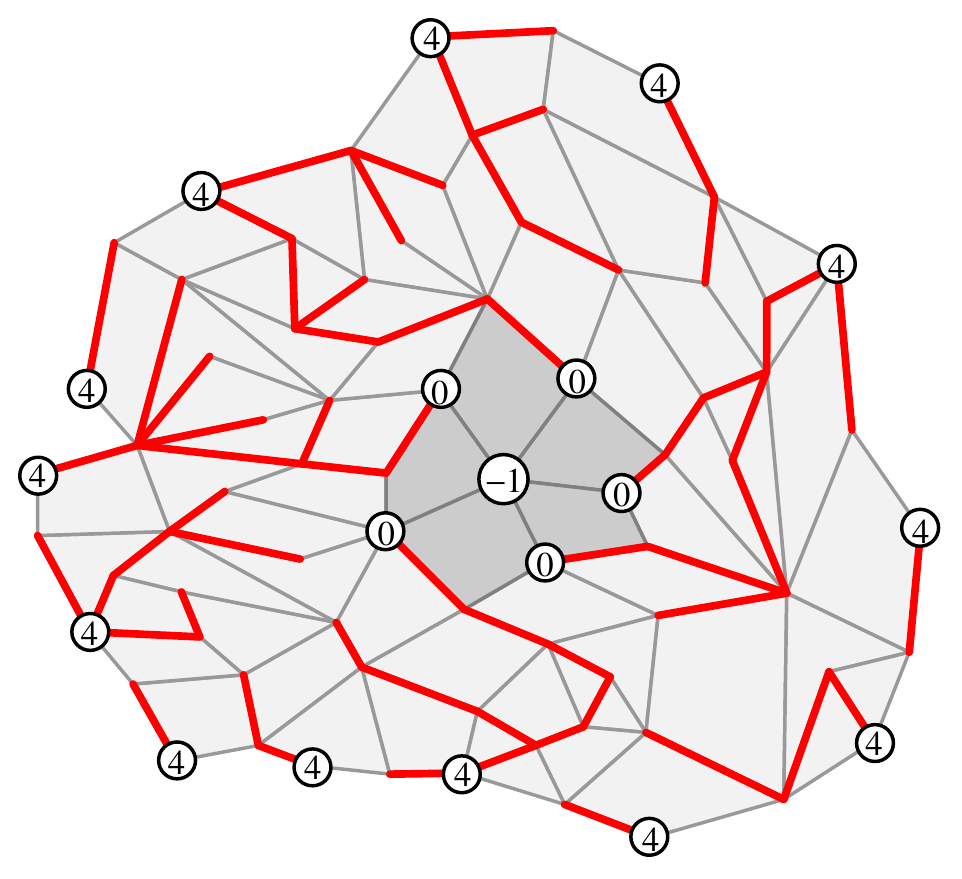}
\label{fig:propagator2}
}
\caption{(a) A quadrangulation with two boundaries contributing to the discrete propagator. (b) The initial boundary face can be quadrangulated by inserting a new vertex with label $-1$, turning the quadrangulation into a pointed quadrangulation with a single boundary. }%
\label{fig:propagator}
\end{figure}

In order to count these quadrangulations we introduce the generating functions $z_{i,x}(t)=z_{i,x}(t,g,\g)$, $i=0,1$, for rooted trees labeled by non-negative integers which differ by at most one along the edges, having label $t$ on the root, no edges starting at a vertex labeled $0$, carrying a factor of $x/g$ for each vertex labeled $0$ and a factor of $\g$ for each local maximum (except for the one at the origin in the case of $z_{1,x}(t)$).
When $x=0$ the labels are restricted to be positive and therefore $z_{i,x=0}(t)=z_i(t)$, where $z_i(t)$ are the generating functions from before.
It is not hard to see that for $x>0$ the $z_{i,x}(t)$ satisfy exactly the same equations (\ref{eq:ztrecur}) but with the boundary condition $z_{1,x}(0)=x/g$ instead of $z_1(0)=0$.

The generating function $\mathcal{G}_{\g}(x,y,g;t)$ for the number of quadrangulations in $\mathcal{Q}_{l_1,l_2,t}(N)$, including a factor $\g$ for each local maximum away from the final boundary, is then given by
\begin{equation}\label{eq:propagatorgenfun}
\begin{aligned}
\mathcal{G}_{\g}(x,y,g;t) & = \sum_{l_1=1}^{\infty}\sum_{l_2=1}^{\infty}\sum_{N=0}^{\infty} x^{l_1}y^{l_2} g^N\sum_{Q\in\mathcal{Q}_{l_1,l_2,t}(N)}  \g^{N_{\text{max}}(Q)} \\
&= \sum_{l_2=1}^{\infty}y^{l_2}\left[(z_{1,x}(t+1))^{l_2}-(z_1(t+1))^{l_2}\right],
\end{aligned}
\end{equation}
where the $x$-independent part $(z_1(t+1))^{l_2}$ has been subtracted on the right-hand to ensure $l_1 \geq 1$.

Since the solution (\ref{eq:ztsolution}) solves (\ref{eq:ztrecur}) for any \emph{real} value $t>0$ and $z_1(t)$ increases monotonically from $0$ at $t=0$ to $z_1$ at $t\to\infty$, we find (with a slight abuse of notation)
\begin{equation}
z_{1,x}(t) = z_1\left( t + z_1^{-1}(x/g)\right),
\end{equation}
provided $0\leq x/g < z_1$.
This solution has a critical point at $x=g z_1$, around which $z_{1,x}(t)$ should be expanded to get the canonical scaling of the initial boundary length in the continuum limit. 
Setting $x = 1/2(1 - X \epsilon + \mathcal{O}(\epsilon^2))$ leads to $z_{1,x}(t) = 2(1-Z_{1,X}(T)\epsilon+ \mathcal{O}(\epsilon^2))$ with
\begin{equation}
Z_{1,X}(T)=Z_1( T + Z_1^{-1}(X) ).
\end{equation}
However, one has to keep in mind that this solution is only valid for $X > Z_1 = \alpha$.
To get the other part of the solution we have to solve (\ref{eq:ztrecur}) with $\bar{z}_1(t)>z_1$, which can be formally obtained by shifting $t \to t + i \pi/(\log\sigma)$.
The corresponding continuum solution $\bar{Z}_1(T)$ is obtained from (\ref{eq:contztsol}) by shifting $T \to T - i \pi/(2\Sigma)$.
This $\bar{Z}_1(T)$ grows monotonically from $\sqrt{2\g_s/\alpha}-\alpha$ to $\alpha$ and therefore allows us to construct $Z_{1,X}(T)$ for $\sqrt{2\g_s/\alpha}-\alpha < X < \alpha$.
Notice that $X > \sqrt{2\g_s/\alpha}-\alpha$ is exactly the region where (\ref{eq:capfunction}) is real and non-singular.

If in addition the final boundary cosmological constant is scaled as $y=1/2(1-Y\epsilon+\mathcal{O}(\epsilon^2))$, it follows from (\ref{eq:propagatorgenfun}) that
\begin{equation}\label{eq:almostprop}
\mathcal{G}_{\g}(x,y,g;t) = \left(\frac{1}{Y+Z_{1,X}(T)}-\frac{1}{Y+Z_1(T)}\right)\epsilon^{-1} + \mathcal{O}(\epsilon^0).
\end{equation}
The term in the parentheses is almost the sought-after propagator $G_{\lambda,\g_s}(X,Y;T)$, except that the final boundary is marked, while we want only the initial boundary to be marked 
(see figure \ref{fig:capcup}c).
To get $G_{\lambda,\g_s}(X,Y;T)$ a factor of $l_1/l_2$ should be included in the sum (\ref{eq:propagatorgenfun}), or, equivalently, one can differentiate (\ref{eq:almostprop}) with respect to $X$ and integrate with respect to $Y$,
\begin{equation}
G_{\lambda,\g_s}(X,Y;T) = \int_{\infty}^Y \rmd Y' \frac{\partial}{\partial X} \left(\frac{1}{Y' + Z_{1,X}( T )}\right) 
= \frac{Z_{1,X}'(T)}{Z_{1,X}'(0)} \frac{1}{Y + Z_{1,X}( T )}.
\label{eq:propagatorsol}
\end{equation}
The cap function is obtained from the propagator by integrating over $T$ and taking the final boundary length to zero,
\begin{equation}\label{eq:capsolution}
W_{\lambda,\g_s}^{\mathrm{cap}}(X) = \lim_{Y\to\infty} Y \int_0^{\infty} \rmd T\,G_{\lambda,\g_s}(X,Y;T) = \frac{\alpha - X}{Z_1'(Z_1^{-1}(X))}.
\end{equation}
From (\ref{eq:contztsol}) it follows that
\begin{equation}\label{ja3}
Z_{1,X}'(T) = 
(\alpha - Z_{1,X}(T))\sqrt{(Z_{1,X}(T)+\alpha)^2 - 2\g_s/\alpha} =: 
-\bar{W}(Z_{1,X}(T)),
\end{equation}
where we have introduced the notation
\begin{equation}
\bar{W}(X) = (X-\alpha)\sqrt{(X+\alpha)^2-2\g_s/\alpha},
\end{equation}
and thus (\ref{eq:capsolution}) reproduces exactly (\ref{eq:capfunction}).
Further, (\ref{eq:propagatorsol}) shows that the 
loop propagator satisfies the differential equation
\begin{equation}\label{ja4}
\frac{\partial}{\partial T} \,G_{\lambda,\g_s}(X,Y;T) = 
-\frac{\partial}{\partial X}\Big( \bar{W}(X) G_{\lambda,\g_s}(X,Y;T)\Big),
\end{equation}
since (\ref{ja3}) is the characteristic equation for (\ref{ja4})
and (\ref{eq:propagatorsol}) is thus the solution with initial 
boundary condition 
\begin{equation}
G_{\lambda,\g_s}(X,Y;T=0)= \frac{1}{X+Y},~~~\mathrm{or}~~~
\tilde{G}_{\lambda,\g_s}(L_1,L_2;T=0) = \delta (L_1-L_2),
\end{equation}
where $\tilde{G}_{\lambda,\g_s}(L_1,L_2;T)$ is the inverse Laplace 
transform of $G_{\lambda,\g_s}(X,Y;T)$.
 
An alternative route towards the solutions (\ref{eq:contztsol}) 
and the propagator (\ref{eq:propagatorsol}) is to directly take the continuum limit of the recurrence relations (\ref{eq:ztrecur}).
This is done by substituting
\begin{equation}
\begin{aligned}
z_1(t+s) &= 2 \left( 1 + \epsilon (Z_1(T) + s \,\epsilon Z_1'(T)) \right) \\
z_0(t+s) &= 2 \epsilon^2 \left( Z_0(T) + s\, \epsilon Z_0'(T) \right)
\end{aligned} \quad s \in \{0,1,2,3\},
\end{equation}
in (\ref{eq:ztrecur}) leading to the differential equations
\begin{align}\label{ja30}
Z_1'(T) &= \lambda - Z_1(T)^2 - 2 Z_0(T), \\
Z_0'(T) &= 2 Z_1(T) Z_0(T) - \g_s.
\end{align}
It can be checked that these equations are solved by (\ref{eq:contztsol}),
but it might be more enlightening to show how they contain all
information about the disk function. First we note that
the equations imply 
\begin{equation}
Z_1''(T) = 2\left(Z_1(T)^3 -\lambda Z_1(T) +\g_s\right).
\end{equation}  
Integration and the assumption that $Z_1(T) \to Z_1=\alpha$ for $T \to 
\infty$ then implies:
\begin{equation}
Z_1'(T) =-\bar{W}(Z_1(T)),~~~~Z_0(T) = \g_s W_{\lambda,\g_s}(Z_1(T)). 
\end{equation}
where  $W_{\lambda,\g_s}(X)$ is given by (\ref{eq:capfunction0}). While 
this determines  $W_{\lambda,\g_s}(X)$ algebraically, one has to 
appeal to figure \ref{figjan1}, say, to identify it as the disk function, as done previously.

\section{Generalized CDT in terms of general planar maps}\label{sec:planarmaps}

Above we have seen that pointed quadrangulations with a certain number of local maxima of the distance functions can be encoded in well-labeled planar trees with the same number of local maxima. 
As we will demonstrate in this section, a set of planar maps including faces of any degree can be used to encode the same information.

There exists a well-known bijection $\Phi_0$, often referred to as the \emph{trivial bijection} (see e.g. \cite{gall_scaling_2011}), between pointed quadrangulations with $N$ faces and pointed planar maps with $N$ edges. 
It can be formulated in a very similar way as the Cori--Vauquelin--Schaeffer bijection, namely in terms of the distance labeling from the origin.
In stead of applying the Schaeffer's prescription in figure \ref{fig:quad0}a, for each face the diagonal is drawn that connects vertices of even label.
In this way all vertices with even label will be part of the planar map, and the vertices with odd label are in one-to-one correspondence with the faces of the map.
It can be seen that this planar map with marked origin completely characterizes the quadrangulation.

However, as we will see below, there exists another inequivalent bijection between rooted pointed quadrangulations $Q$ with $N$ faces and rooted pointed planar maps with $N$ edges. 
Although we expect this bijection to be known, since it is quite similar to the bijection introduce by Miermont in \cite{miermont_tessellations_2009} and used in \cite{bouttier_three-point_2008}, we were unable to find an explicit reference to it in the literature.
The bijection is a consequence of the general result in theorem \ref{thm:genbijection} below, which is quite similar to theorem 4 in \cite{miermont_tessellations_2009}.
To keep the discussion self-contained we provide a proof, which is largely inspired by the proofs in \cite{chassaing_random_2004,miermont_tessellations_2009}.

Let $\mathcal{Q}^{(l)}$ be the set of rooted quadrangulations of the sphere, i.e. quadrangulations with a marked unoriented edge, equipped with integer labels on the vertices, such that the labels differ by exactly one along the edges.
Let $\mathcal{M}^{(l)}$ be the set of rooted planar maps with integer labels on the vertices, such that the labels differ by \emph{at most} one along the edges.
Given a labeled rooted quadrangulation $Q \in \mathcal{Q}^{(l)}$, according to lemma \ref{thm:qtopm}, application of Schaeffer's prescription leads to a planar map.
This planar map is naturally labeled, since its vertices are also vertices of $Q$, and it can be rooted at the corner containing the end point with largest label of the root edge of $Q$.
Hence, one obtains a labeled rooted planar map, which we denote by $\Psi(Q)\in\mathcal{M}^{(l)}$.

\begin{figure}[t]
\centering
\includegraphics[width=4.5cm]{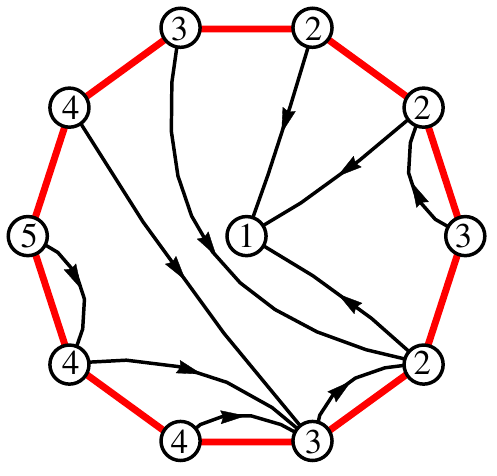}
\caption{An example of the operation performed on a face of a labeled planar map in the definition of $\Xi$.}%
\label{fig:inversebij}
\end{figure}

\begin{theorem}\label{thm:genbijection}
Schaeffer's prescription defines a bijection $\Psi : \mathcal{Q}^{(l)} \to \mathcal{M}^{(l)}$ with the following properties:
\begin{enumerate}
\item The number of edges of $\Psi(Q)$ equals the number of faces $N$ of $Q$.
\item The number of faces of $\Psi(Q)$ equals the number of local minima $N_{\text{min}}$ of $Q$.
\item The number of local maxima of $\Psi(Q)$ equals the number of local maxima $N_{\text{max}}$ of $Q$.
\item The maximal label of $\Psi(Q)$ equals the maximal label of $Q$.
\item The minimal label of $\Psi(Q)$ equals the minimal label of $Q$ plus one.
\item The label on the root of $\Psi(Q)$ equals the maximal label on the root edge of $Q$.
\end{enumerate}
\end{theorem}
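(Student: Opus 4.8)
The plan is to prove well-definedness, bijectivity, and the six properties in three stages, leaning on Lemma~\ref{thm:qtopm} and on local inspection of the colouring rule in figure~\ref{fig:quad0}a.

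First, well-definedness of $\Psi$ together with properties 1, 2, 4, 5 and 6 are essentially immediate. Lemma~\ref{thm:qtopm} already guarantees that Schaeffer's prescription sends $Q$ to a connected planar map whose $N$ edges correspond to the faces of $Q$ (property 1) and whose $F=N_{\text{min}}$ faces each contain exactly one local minimum of $Q$ (property 2). To see that $\Psi(Q)\in\mathcal{M}^{(l)}$ I note that the coloured edge inside a simple face joins vertices whose labels differ by one, while the diagonal of a confluent face joins two equally-labelled vertices, so labels along edges of $\Psi(Q)$ differ by at most one. The rooting is well-defined because the larger-labelled endpoint of the root edge of $Q$ has a strictly smaller-labelled neighbour, hence is not a local minimum and survives as a vertex of $\Psi(Q)$; this choice is precisely property 6. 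For properties 4 and 5 I argue by bookkeeping: the global maximum of $Q$ is a local maximum, hence not a local minimum, so it persists in $\Psi(Q)$ with its label, and no surviving vertex can exceed it (property 4); a vertex labelled one above the global minimum has a minimal neighbour and so survives with that label, while every surviving vertex is a non-minimum and thus exceeds the global minimum, giving property 5.

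The substantive part is bijectivity, for which I would exhibit an explicit inverse $\Xi:\mathcal{M}^{(l)}\to\mathcal{Q}^{(l)}$ operating face by face, as indicated in figure~\ref{fig:inversebij}. Given $M\in\mathcal{M}^{(l)}$, inside each face $\mathcal{F}$ I insert one new vertex labelled $m_{\mathcal{F}}-1$, where $m_{\mathcal{F}}$ is the smallest label on the contour of $\mathcal{F}$, and then draw, for every corner of $\mathcal{F}$, an arc to the first strictly-smaller-labelled corner following it along the contour, or to the inserted vertex when the corner attains $m_{\mathcal{F}}$. Deleting the edges of $M$ leaves these arcs, which I claim form a quadrangulation in $\mathcal{Q}^{(l)}$: each edge of $M$ is flanked by exactly two arcs and hence sits inside a unique quadrilateral, and the labelling differs by exactly one along every new edge by construction. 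The identities $\Xi\circ\Psi=\mathrm{id}$ and $\Psi\circ\Xi=\mathrm{id}$ are then obtained by checking that the local picture produced in each face of $M$ is exactly the inverse of the colouring rule of figure~\ref{fig:quad0}a, so that reconstructing and re-colouring returns the original object. Here the ``labels differ by at most one'' condition defining $\mathcal{M}^{(l)}$ is precisely what makes the arcs drawable without crossings and renders the inverse of each local configuration unambiguous.

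Finally, property 3 follows by the same local-inspection technique used in Lemma~\ref{thm:qtopm}. The key observation is that a vertex $w$ of label $t$ acquires an $M$-neighbour of label $t+1$ if and only if $w$ has a $Q$-neighbour of label $t+1$: a $Q$-edge from $w$ to a $(t+1)$-vertex forces, by figure~\ref{fig:quad0}a, a coloured edge at $w$ joining it to a $(t+1)$-vertex, and conversely every $M$-edge raising the label of $w$ arises from a simple face exhibiting such a $Q$-neighbour. Since (using that labels in $Q$ differ by exactly one, and in $M$ by at most one) a vertex is a local maximum of $Q$ exactly when it has no $Q$-neighbour of larger label, and a local maximum of $M$ exactly when it has no $M$-neighbour of larger label, the equivalence transfers local maxima faithfully. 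I expect the main obstacle to be the crossing-freeness and mutual-inverse checks for $\Xi$ in the second stage, where the global consistency of the face-local arcs on the sphere---rather than any single local configuration---must be argued with care; the Euler-characteristic count already secured in Lemma~\ref{thm:qtopm} should carry most of that burden.
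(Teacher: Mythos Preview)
Your overall strategy---verify the six properties by local inspection and prove bijectivity via an explicit inverse $\Xi$---is exactly the paper's, and your treatment of properties 1, 2, 4, 5, 6 and of $\Psi(Q)\in\mathcal{M}^{(l)}$ is fine. There are, however, two real gaps in the bijectivity argument.

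First, the sentence ``each edge of $M$ is flanked by exactly two arcs and hence sits inside a unique quadrilateral'' is false as stated. In the combined map $M'$ (edges of $M$ together with the new arcs), the face to the right of an edge $e$ of $M$, oriented from label $t_0$ to label $t_1$, has degree $2$, $3$, or $4$ according as $t_1-t_0$ equals $1$, $0$, or $-1$; one must check these three cases separately and then observe that deleting $e$ always merges the two adjacent faces into a single face of degree four. This case analysis is short but cannot be skipped, and it is also what verifies $\Psi\circ\Xi=\mathrm{id}$: in each case the resulting quadrangle is precisely of the type (simple or confluent) whose colouring rule reproduces $e$.

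Second, and more seriously, your sketch does not address $\Xi\circ\Psi=\mathrm{id}$. The local check you describe gives only $\Psi\circ\Xi=\mathrm{id}$; the other direction requires showing that every edge of the original $Q$ is recreated when $\Xi$ is applied to $\Psi(Q)$. The paper handles the non-obvious case---an edge $e$ of $Q$ with labels $(t,t-1)$ neither of whose endpoints is a local minimum---by viewing $e$ as a chord of a face $\mathcal{F}$ of $\Psi(Q)$ and proving, via an iterative inspection of the quadrangles to the right of $e$, that every corner of $\mathcal{F}$ on that side has label at least $t$. This is exactly what forces $\Xi$ to draw $e$. Your appeal to the Euler-characteristic count of Lemma~\ref{thm:qtopm} does not supply this; that lemma controls $\Psi$, not the reconstruction, and matching face/edge counts between $Q$ and $\Xi(\Psi(Q))$ does not by itself force the two quadrangulations to coincide.

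A minor point on property 3: your forward implication is phrased as if a single $Q$-edge from $w$ to a $(t+1)$-vertex already produces a coloured edge at $w$ to a $(t+1)$-vertex, which is not true edge-by-edge. The correct argument (which the paper uses) is that if $w$ has both a $(t+1)$- and a $(t-1)$-neighbour in $Q$, then somewhere in the cyclic sequence of neighbours there is a transition forcing a simple face whose coloured side joins $w$ to a $(t+1)$-vertex.
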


\begin{proof}
Let us start by verifying properties (i)-(vi) for a quadrangulation $Q$.
Properties (i) and (ii) have already been proven in lemma 1.
To show (iii), let $v$ be a vertex of $Q$ with label $t$.
Suppose that $v$ is a local maximum on the quadrangulation, meaning that all its neighbors have smaller label, then it has maximal label in each of the adjacent faces.
By inspection of the rules in figure \ref{fig:quad0}(a) there is an edge of $\Psi(Q)$ leading away from $v$ ending at a vertex with smaller or equal label for each adjacent face in $Q$.
Therefore $v$ is also a local maximum in $\Psi(Q)$.
If $v$ is not a local maximum, it is either a local minimum, in which case it does not appear as a vertex on $\Psi(Q)$, or both $t+1$ and $t-1$ appear as labels on its neighbors in $Q$.
In the latter case there must be a simple face adjacent to $v$ for which the colouring leads to an edge connecting $v$ to a vertex with label $t+1$.
Then $v$ is not a local maximum of $\Psi(Q)$.
Hence, the local maxima of $Q$ and $\Psi(Q)$ coincide, which completes the proof of (iii).
Moreover, (iv) follows from the fact that the vertices with maximal label are necessarily local maxima.
Properties (v) is a consequence of the existence of at least one edge leading away from a vertex with minimal label, the end point of which is not a local minimum and therefore a vertex of $\Psi(Q)$.
Finally, (vi) is a direct consequence of the construction.

To show that $\Psi$ is a bijection, let us introduce a map $\Xi : \mathcal{M}^{(l)} \to \mathcal{Q}^{(l)}$, which we will later show to be the inverse of $\Psi$.
Let $M\in\mathcal{M}^{(l)}$ be a labeled rooted planar map.
For each face $\mathcal{F}$ of $M$ the following operation is performed.
Let $t_{\mathcal{F}}$ be the minimal label of the corners of $\mathcal{F}$.
A new vertex is inserted in the interior of $\mathcal{F}$ with label $t_{\mathcal{F}}-1$.
Then for each corner a new edge is drawn.
If the corner is labeled $t_{\mathcal{F}}$ it is connected to the new vertex, otherwise it is connected to the first corner with smaller label that one encounters when traversing the boundary of $\mathcal{F}$ in anti-clockwise order.
See figure \ref{fig:inversebij} for an example.
Up to continuous deformations a unique way of drawing the new edges without crossing exists.
To show this, it suffices to consider the edges leaving the corners with label larger than $t_{\mathcal{F}}$.
Let $e$ be such an edge with labels $t$ and $t-1$ on its end points, then by construction the corners on the right-hand side of $e$ have label larger or equal to $t$.
Therefore they will never appear as end point of an edge starting at a corner on the left-hand side of $e$.
Hence, $e$ can be drawn uniquely without crossing any other edges by keeping on its left-hand side both the new vertex and the previously drawn edges that start at corners on its left-hand side.

We claim that application of this operation to each face followed by deletion of the original edges gives a quadrangulation of the sphere.
In order to show this let us consider the combined planar map $M'$ containing both the old and new edges.
Each face $\mathcal{F}'$ of $M'$ is by construction bounded by exactly one old edge $e$.
Let us orient $e$ such that $\mathcal{F}'$ is on its right-hand side and denote the labels of the starting and end point of $e$ by $t_0$ and $t_1$.
As a result of the drawing rules, if $t_1-t_0=1$ the degree of $\mathcal{F}'$ is two, since there is a new edge with the same end points as $e$.
If $t_1-t_0=0$ the degree is three, because both end points of $e$ are connected by new edges to the same vertex with label $t_0-1$.
Finally, it can be seen that when $t_1-t_0=-1$ the degree of $\mathcal{F}'$ is four.
Indeed, the end point of $e$ is connected to the first corner $c_1$ with label $t_1-1$ encountered when running anti-clockwise around the face.
But this is also the first corner with label $t_1-1$ after the first corner $c_0$ with label $t_1=t_0-1$, to which the starting point of $e$ is connected.
By inspecting the degrees of the faces on either side of the edge $e$, we conclude that deletion of $e$ in any case results in a face of degree four.
Hence, removing all old edges from $M'$ gives a quadrangulation $\Xi(M)$, which is labeled by construction and can be rooted on the edge that was drawn from the root corner of $M$.

With the explicit construction of $\Xi$ in hand it is straightforward to show that $\Psi\circ\Xi$ is the identity on $\mathcal{M}^{(l)}$.
Indeed, as we have seen above, each face $\mathcal{F}$ of $\Xi(M)$ is formed by merging the faces on either side of an edge $e$ of $M$ with labels $t_0$ and $t_1$.
If $t_0=t_1$ the face $\mathcal{F}$ is confluent and $e$ is the diagonal connecting the corners of $\mathcal{F}$ with largest label.
Otherwise $\mathcal{F}$ is simple and $e$ connects the corner with largest label to the next corner in clockwise direction around the face.
In both cases these are precisely the coloured edges that are to be drawn according to Schaeffer's prescription.
Hence, $\Psi(\Xi(M))$ has the same edges as $M$, while the labels on $M$ are unaffected by the operation and $\Psi(\Xi(M))$ is easily seen to be rooted at the root corner of $M$.

Finally let us prove that $\Xi\circ\Psi$ is the identity on $\mathcal{Q}^{(l)}$.
To do this we show that all of the edges of a quadrangulation $Q\in\mathcal{Q}^{(l)}$ arise from the operation described above on the faces of the planar map $M =\Psi(Q)$.
First of all, let $e$ be an edge of $Q$ that ends on a local minimum of $Q$.
Since the starting point of $e$ is not a local minimum, it corresponds to a corner of the face $\mathcal{F}$ of $M$ that contains the end point of $e$ in its interior.
Since this corner has label exactly one larger than the vertex in the interior of $\mathcal{F}$, the edge $e$ is indeed created by the operation.
Now let $e$ be an edge for which neither end point is a local minimum.
We orient the edge such that its labels are of the form $(t,t-1)$.
Since both end points of $e$ are on $M$, $e$ defines a chord of a face $\mathcal{F}$ of $M$.
We claim that the corners of $\mathcal{F}$ that lie on the right-hand side of $e$ have label larger or equal to $t$, which implies that the edge $e$ is created by the operation of $\Xi$.
To see this we inspect the quadrangle directly on the right-hand side of $e$.
As shown in figure \ref{fig:invbijproof}, three situations are possible, but in each case the corners of the quadrangle that correspond to new corners of $\mathcal{F}$ have label larger or equal to $t$.
Moreover, the other edges of the quadrangle that lie in the same face $\mathcal{F}$ are again chords with labels larger or equal to those of $e$.
The argument can be repeated with the new chords showing that no corners with label smaller than $t$ can appear on the right-hand side of $e$.
Hence, all edges of $Q$ are created in the operation of $\Xi$ on the faces of $M$ and since the number of corners of $M$ is equal to the number of edges of $Q$, $Q$ can have no additional edges.
\end{proof}

\begin{figure}[t]
\centering
\includegraphics[width=10.5cm]{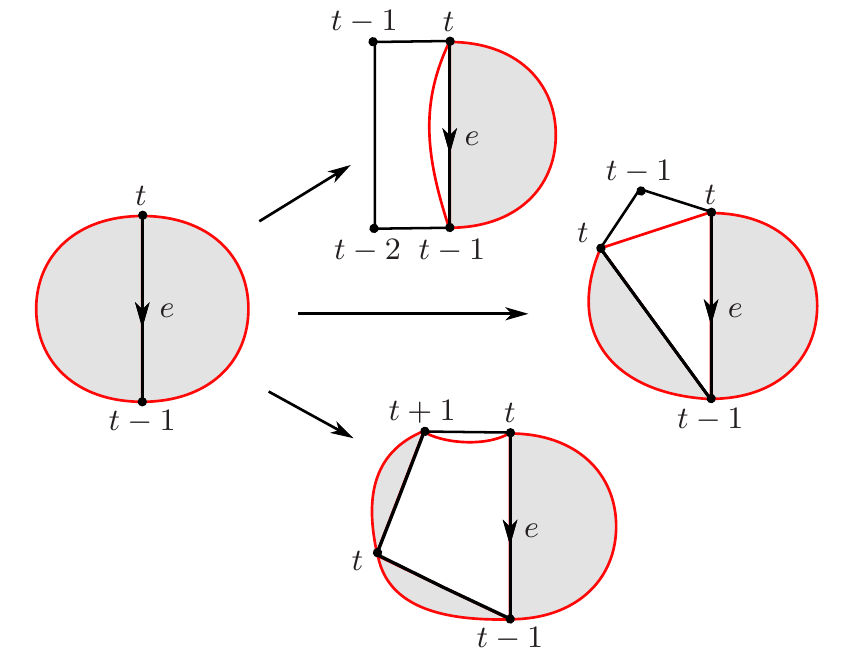}
\caption{Illustration of the fact that the corners to the right of $e$ have label larger than $t-1$. The red curve represents the boundary of the face $\mathcal{F}$, while the shaded areas represents the unknown tiling of the $\mathcal{F}$ by faces of $Q$.}%
\label{fig:invbijproof}
\end{figure}

The bijection $\Psi$ of theorem \ref{thm:genbijection} is quite useful as it gives rise to a whole bunch of specialized bijections when restricted to subsets of $\mathcal{Q}^{(l)}$ defined by certain restrictions on the properties (i)-(vi).
For instance, the trivial bijection is obtained from theorem \ref{thm:genbijection} by restricting the labels on the quadrangulations to take the values $0$ or $1$.
To retrieve the Cori--Vauquelin--Schaeffer bijection from section \ref{sec:schaeffer} we need the following simple lemma.

\begin{lemma}\label{thm:dist}
Let $M\in\mathcal{M}^{(l)}$ be a labeled rooted planar map.
If $M$ has a single local minimum labeled zero, then the labeling corresponds to the distance labeling to the vertex labeled zero.
The same holds for labeled rooted quadrangulations, since $\mathcal{Q}^{(l)}\subset\mathcal{M}^{(l)}$.
\end{lemma}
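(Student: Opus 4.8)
The plan is to write $v_0$ for the unique local minimum, which by hypothesis carries label $0$, and to prove that the label $\ell(v)$ of every vertex equals its graph distance $d(v)$ to $v_0$ by establishing the two inequalities $\ell(v)\le d(v)$ and $d(v)\le\ell(v)$ separately. Here a local minimum means, as in lemma \ref{thm:qtopm}, a vertex whose label is smaller than or equal to the labels of all its neighbors.

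First I would observe that all labels are nonnegative. The vertex realizing the globally smallest label is automatically a local minimum, so by assumption it must coincide with $v_0$, which is labeled $0$; hence every label is at least $0$, and moreover $v_0$ is the \emph{only} vertex labeled $0$, since any other such vertex would again be a global, hence local, minimum. The inequality $\ell(v)\le d(v)$ is then immediate: along a shortest path from $v_0$ to $v$ the label changes by at most one at each of the $d(v)$ edges (by the defining property of $\mathcal{M}^{(l)}$) and starts at $0$, so it cannot exceed $d(v)$.

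For the reverse inequality I would use a descent argument. Any vertex $v\neq v_0$ fails to be a local minimum, so it has a neighbor with strictly smaller label; since labels differ by at most one along edges, that neighbor carries label exactly $\ell(v)-1$. Iterating produces a sequence of distinct vertices whose labels decrease by exactly one at each step. Because the labels remain nonnegative, the sequence must terminate, and it can only stop at a vertex with no smaller-labeled neighbor, i.e.\ a local minimum, which is necessarily $v_0$ with label $0$. This path therefore has length exactly $\ell(v)$, giving $d(v)\le\ell(v)$ and hence $\ell(v)=d(v)$. The statement for quadrangulations follows at once from the inclusion $\mathcal{Q}^{(l)}\subset\mathcal{M}^{(l)}$, since labels differing by exactly one is a special case of differing by at most one.

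The only point requiring care — and the mild obstacle — is the bookkeeping guaranteeing that the descending path actually reaches $v_0$ rather than stalling at some other vertex; this is precisely where the uniqueness of the local minimum, combined with the nonnegativity of the labels just established, is invoked. No further input is needed.
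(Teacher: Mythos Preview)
Your proof is correct and follows essentially the same approach as the paper: a descent argument showing that from any vertex with positive label one can step to a neighbor with label exactly one less, yielding a path of length $\ell(v)$ to $v_0$, combined with the observation that labels change by at most one along edges so no shorter path exists. You are simply more explicit than the paper about the preliminary step that all labels are nonnegative (via the global minimum being a local minimum), which the paper leaves implicit.
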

\begin{proof}
Each vertex of $M$ with label $t>0$ is not a local minimum and therefore has a ``descending'' edge connecting it to a vertex of label $t-1$. 
Starting at a vertex with label $t>0$ one can find a path of length $t$ to the vertex with label $0$ by repeatedly traversing descending edges. 
Since the labels differ by at most one along the edges, no shorter path exists.
\end{proof}

It follows from lemma \ref{thm:dist} that the subset of $\mathcal{Q}^{(l)}$ given by quadrangulations $Q$ with a single local minimum labeled zero is in bijection with the set of rooted pointed quadrangulations.
According to theorem 1 it is also in bijection with the set of labeled rooted planar maps with one face and minimal label one, i.e. the set of rooted well-labeled trees.
Moreover, property (iii) justifies the counting of the local maxima of $\Psi(Q)$ instead of the local maxima of $Q$ used in section \ref{sec:gencdt}.

Instead of requiring a single local minimum one can also require a single local maximum labeled zero.
In that case the labels on the quadrangulations $Q$ represent minus the distance to the local maximum, again giving a subset of $\mathcal{Q}^{(l)}$ in bijection with the set of rooted pointed quadrangulations.
But now according to theorem 1 this subset is in bijection with the set of labeled rooted planar maps $M$ with a single local maximum with label 0.
It follows, again from lemma \ref{thm:dist}, that the labels on $M$ also correspond to minus the distance to the local maximum and are fixed by the choice of a distinguished vertex with label zero.
We therefore have the following result, where we flipped the sign of all the labels.

\begin{theorem}\label{thm:quadmap}
Application of the prescription in figure \ref{fig:quadtomap}a to the distance labeling of a rooted pointed quadrangulation $Q$ gives a planar map $\Phi(Q)$ with a distinguished vertex, which can be rooted at the corner containing the end point with smallest label of the root edge of $Q$.
The map $\Phi : \mathcal{Q} \to \mathcal{M}$ is a bijection between the set $\mathcal{Q}$ of rooted pointed quadrangulations of the sphere and the set $\mathcal{M}$ of rooted pointed planar maps.
It satisfies the following properties:
\begin{enumerate}
\item The number of edges of $\Phi(Q)$ equals the number of faces of $Q$.
\item The number of faces of $\Phi(Q)$ equals the number of local maxima $N_{\text{max}}$ of $Q$.
\item The distance of the origin to the root vertex in $\Phi(Q)$ equals the distance of the origin to the closest end point of the root edge in $Q$.
\end{enumerate}
\end{theorem}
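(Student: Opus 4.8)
The plan is to deduce Theorem~\ref{thm:quadmap} from the general bijection $\Psi$ of Theorem~\ref{thm:genbijection} via the sign-flip already sketched before the statement. Instead of labelling a rooted pointed quadrangulation $Q$ by the distance $d$ to the origin $o$, I would use $\ell=-d$, so that $o$ becomes the unique local \emph{maximum}, labelled $0$. By construction, applying the prescription of figure~\ref{fig:quadtomap}a to $d$ is the same as computing $\Psi(Q,\ell)$, and $\Phi$ is then $\Psi$ followed by discarding the labelling while keeping the marked vertex. The substance of the proof is to verify that nothing is lost in this last step and to transport properties (i), (ii), (vi) of Theorem~\ref{thm:genbijection} through the sign flip.

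First I would record two reductions coming from Lemma~\ref{thm:dist}. Let $\mathcal{Q}^{(l)}_{\max}\subset\mathcal{Q}^{(l)}$ and $\mathcal{M}^{(l)}_{\max}\subset\mathcal{M}^{(l)}$ denote the labelled quadrangulations, resp.\ planar maps, with a single local maximum, labelled $0$. On the quadrangulation side $(Q,-d)$ lies in $\mathcal{Q}^{(l)}_{\max}$: a quadrangulation is bipartite, so $d$ varies by exactly one along each edge, and $o$ is the only local minimum of $d$ because every other vertex has a neighbour one step closer to $o$; conversely the sign-flipped Lemma~\ref{thm:dist} shows that any member of $\mathcal{Q}^{(l)}_{\max}$ has $\ell$ equal to minus the distance to its unique local maximum, so origin and labelling determine each other and $\mathcal{Q}\cong\mathcal{Q}^{(l)}_{\max}$. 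The identical argument gives $\mathcal{M}\cong\mathcal{M}^{(l)}_{\max}$, since for a pointed planar map the negated distance to the marked vertex varies by at most one along edges and makes that vertex the unique local maximum. Finally $\Psi$ restricts to a bijection $\mathcal{Q}^{(l)}_{\max}\to\mathcal{M}^{(l)}_{\max}$: properties (iii)--(iv) of Theorem~\ref{thm:genbijection} say that $\Psi$ preserves both the number of local maxima and the maximal label, giving one inclusion, while applying the same properties to $Q'=\Xi(M)$ for $M\in\mathcal{M}^{(l)}_{\max}$ yields $Q'\in\mathcal{Q}^{(l)}_{\max}$ and hence the reverse containment. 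Composing the three bijections $\mathcal{Q}\cong\mathcal{Q}^{(l)}_{\max}\to\mathcal{M}^{(l)}_{\max}\cong\mathcal{M}$ (the middle arrow being $\Psi$) produces $\Phi$.

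The three listed properties then follow by reading off those of $\Psi$. Property (i) is immediate because $\Phi(Q)$ and $\Psi(Q,-d)$ share the same edge set, so their common number of edges equals the number $N$ of faces of $Q$ by property (i) of Theorem~\ref{thm:genbijection}. For property (ii) I use property (ii) of Theorem~\ref{thm:genbijection}: the faces of $\Psi(Q,-d)$ count the local minima of the labelling $-d$, which are exactly the local maxima of the distance labelling, i.e.\ $N_{\text{max}}(Q)$. For property (iii) I combine property (vi) of Theorem~\ref{thm:genbijection} --- the root label of $\Psi(Q,-d)$ is the maximal $(-d)$-label on the root edge, i.e.\ minus the distance of its endpoint nearer to $o$ (an endpoint which, having a strictly farther neighbour across the root edge, is never a local maximum of $d$ and hence is a genuine vertex of $\Phi(Q)$) --- with Lemma~\ref{thm:dist}, which identifies the label of any vertex of $\Phi(Q)$ with minus its distance to $o$ within $\Phi(Q)$. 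Hence the distance in $\Phi(Q)$ from $o$ to the root vertex equals the distance in $Q$ from $o$ to the nearer endpoint of the root edge, as claimed, and the root of $\Phi(Q)$ sits at the corner at the smallest-label endpoint of the root edge.

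I expect the main obstacle to be purely one of bookkeeping: keeping ``local minimum/maximum'', ``smallest/largest label'' and the position of the root corner consistently negated, and checking that the unoriented-edge rooting on the quadrangulation side matches the corner rooting produced by $\Psi$. The potentially delicate analytic points are in fact trivial: the negated distance labelling automatically lands in $\mathcal{M}^{(l)}$ because only the ``at most one'' condition is required and a graph distance always satisfies it, and the marked vertex is the \emph{unique} local maximum because a distance function has its source as its only local minimum. Everything else is transport of structure through the already-established bijection $\Psi$ and Lemma~\ref{thm:dist}.
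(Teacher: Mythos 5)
Your proposal is correct and follows essentially the same route as the paper: the paper's own proof simply points to the discussion preceding the theorem, which performs exactly your sign-flip reduction to labeled objects with a unique local maximum labeled zero, invokes Theorem~\ref{thm:genbijection} restricted to that subset together with (the sign-flipped) Lemma~\ref{thm:dist}, and reads off properties (i), (ii), (iii) from properties (i), (ii), (vi) of Theorem~\ref{thm:genbijection}. Your write-up merely makes explicit some bookkeeping the paper leaves implicit (that $\Psi$ restricts to a bijection between the two subsets, and that the root vertex is a genuine vertex of the map).
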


\begin{proof} 
See discussion above. Properties (i), (ii) and (iii) follow directly from properties (i), (ii) and (vi), respectively, of theorem 1.
\end{proof}

In figure \ref{fig:quadtomap}c we have shown the result of the new colouring rules for the same quadrangulation as in figure \ref{fig:quad0}.

\begin{figure}[t]
{\centering
\includegraphics[width=\linewidth]{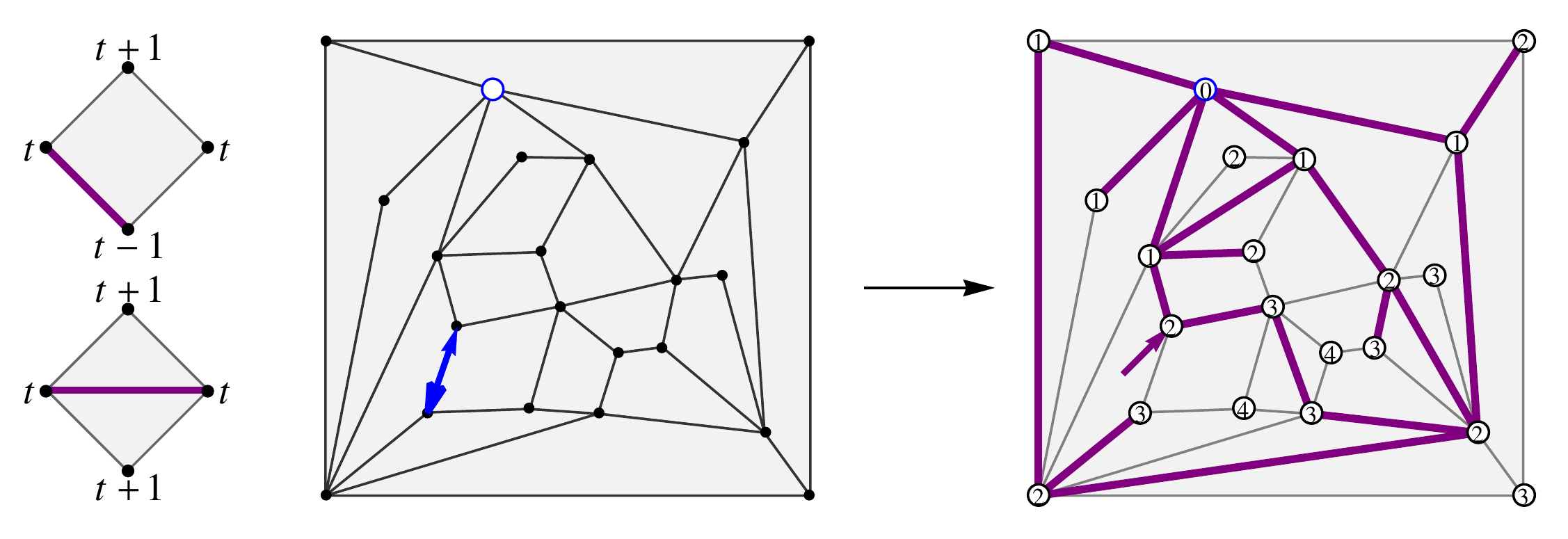}}
\vspace{-0.6cm}

\hspace{1cm}(a)\hspace{3.9cm}(b)\hspace{6.6cm}(c)
\caption{The prescription (a) opposite to Schaeffer's prescription gives a bijection between rooted pointed quadrangulations (b) and rooted pointed planar maps (c). }%
\label{fig:quadtomap}
\end{figure}

As a consequence of theorem \ref{thm:quadmap} the counting of rooted pointed quadrangulations with a fixed number of faces $N$ and a fixed number of local maxima $N_{\text{max}}$ coincides with the counting of rooted pointed planar maps with $N$ edges and $N_{\text{max}}$ faces.\footnote{Let us remark that combining $\Phi$ with the trivial bijection gives a bijection $\Phi_0^{-1} \circ \Phi$ of the set of rooted pointed quadrangulations onto itself. A quadrangulation with $N_{\text{max}}$ local maxima is mapped to a quadrangulation with precisely $N_{\text{max}}$ vertices at odd distance of the origin, which therefore also obey the same counting.}
In fact, we may conclude that the generalized CDT amplitudes appear as the natural scaling limit of random planar maps in which the number of faces is kept finite.

\begin{figure}[t]
\centering
\begin{tabular}{|r|r|rrrrrr|}
\hline & & $t=0$ & $t=1$ & $t=2$ & $t=3$ & $t=4$ & $t=5$ \\
\hline $N=1$ & $n=1$ & 1 & 1 &  &  &  &  \\
& $n=2$ & 1 &  &  &  &  &  \\
\hline $N=2$ & $n=1$ & 2 & 3 & 1 &  &  &  \\
& $n=2$ & 5 & 5 &  &  &  &  \\
& $n=3$ & 2 &  &  &  &  &  \\
\hline $N=3$ & $n=1$ & 5 & 9 & 5 & 1 &  &  \\
& $n=2$ & 22 & 34 & 10 &  &  &  \\
& $n=3$ & 22 & 22 &  &  &  &  \\
& $n=4$ & 5 &  &  &  &  &  \\
\hline $N=4$ & $n=1$ & 14 & 28 & 20 & 7 & 1 &  \\
& $n=2$ & 93 & 175 & 89 & 15 &  &  \\
& $n=3$ & 164 & 258 & 70 &  &  &  \\
& $n=4$ & 93 & 93 &  &  &  &  \\
& $n=5$ & 14 &  &  &  &  &  \\
\hline $N=5$ & $n=1$ & 42 & 90 & 75 & 35 & 9 & 1 \\
& $n=2$ & 386 & 813 & 546 & 165 & 20 &  \\
& $n=3$ & 1030 & 1993 & 954 & 143 &  &  \\
& $n=4$ & 1030 & 1640 & 420 &  &  &  \\
& $n=5$ & 386 & 386 &  &  &  &  \\
& $n=6$ & 42 &  &  &  &  &  \\
\hline
\end{tabular}
\caption{The first few non-zero values of the number $\mathcal{N}_t(N,n)$ of rooted pointed maps with $N$ edges, $n$ faces and the origin at distance $t$ from the root.}%
\label{fig:table}
\end{figure}


In section \ref{sec:timedep} we have studied the distribution of distances between the root and the origin in random rooted pointed quadrangulation with a certain number of local maxima.
As a byproduct of this analysis we obtain expressions for the distribution of distances in random planar maps as a function of the number of edges and the number of faces.

Recall from section \ref{sec:timedep} that $z_0(t)-z_0(t-1)$, with $z_0(t)$ as in \eqref{eq:ztsolution}, is the generating function for the number of rooted pointed quadrangulations with $N$ faces and $n$ local maxima for which the origin is a distance $t$ from the furthest end of the root edge.
Hence, we find the generating function
\begin{equation}
z_t(g,\g) := z_0(t+1)-z_0(t) = \sum_{N=0}^{\infty} \sum_{n=0}^{N+1} \mathcal{N}_t(N,n) g^N\, \g^n
\end{equation}
for the number $\mathcal{N}_t(N,n)$ of rooted pointed quadrangulations with the origin exactly at distance $t$ from the \emph{closest} end of the root edge.
According to theorem \ref{thm:quadmap}, property (iii), these quadrangulations are mapped by $\Phi$ exactly onto rooted pointed planar maps with $N$ edges, $n$ faces and the origin at distance $t$ from the root, which are therefore also counted by $z_t(g,\g)$.
It is a direct generalization of the generating function $z_{t=0}(g,\g)$ for rooted planar maps with $N$ edges and $n$ faces first derived by Tutte in \cite{tutte_enumeration_1968}.
On the other hand it generalizes the two-point functions for planar maps derived in \cite{bouttier_geodesic_2003,di_francesco_geodesic_2005,bouttier_planar_2012} of which the simplest versions roughly correspond to $z_{t}(g,\g=1)$.
By plugging the solutions to (\ref{eq:zrecur}) and (\ref{eq:betasigma}) into (\ref{eq:ztsolution}) one can explicitly compute the coefficients $\mathcal{N}_t(N,n)$.
The first few non-zero values of $\mathcal{N}_t(N,n)$ are shown in table \ref{fig:table} and are checked to agree with a brute-force enumeration of all planar maps up to $N=5$.

\section{Loop identities}\label{sec:loops}

A process was studied in \cite{ambjorn_putting_2007,ambjorn_string_2008} in which two universes merge and the resulting universe disappears into the vacuum.
From a two-dimensional perspective, we consider the amplitude for surfaces with two boundary loops of length $L_1$ and $L_2$ respectively separated by a geodesic distance $D$ (figure \ref{fig:twoloop}).
Contrary to the case of the propagator, where all points on the final boundary are required to have a fixed distance to the initial boundary, here we only fix the minimal distance between the boundaries.
To get a generalized CDT amplitude, one has to specify the time on the boundaries, which is taken to be constant $T_1$ and $T_2$ respectively.
In order to get a continuous time function throughout the surface, the boundary times $T_1$ and $T_2$ and the distance $D$ have to satisfy the inequality $|T_1 - T_2| \leq D$.
The resulting amplitude is denoted by $G_{\lambda,\g_s}(L_1,L_2;T_1,T_2;D)$.

In \cite{ambjorn_putting_2007,ambjorn_string_2008} it was shown that $G_{\lambda,\g_s}(L_1,L_2;T_1,T_2;D)$ can be expressed in terms of two propagators and a cap function.
A non-trivial calculation showed that, quite remarkably, the amplitude $G_{\lambda,\g_s}(L_1,L_2;T_1,T_2;D)$ does not depend on $T_1$ and $T_2$ at all,
\begin{equation}\label{eq:twoloopidentity}
G_{\lambda,\g_s}(L_1,L_2;T_1,T_2;D) = G_{\lambda,\g_s}(L_1,L_2;D) \quad (\text{provided }|T_1 - T_2| \leq D).
\end{equation}
Even though the same ensemble of surfaces contributes to the loop-loop amplitudes with different initial times, like the ones in figure \ref{fig:twoloop}, this is non-trivial because the weight assigned to each surface depends on the number of local maxima of the time functions and therefore on the boundary times.

\begin{figure}[t]
\centering
\includegraphics[height=4.5cm]{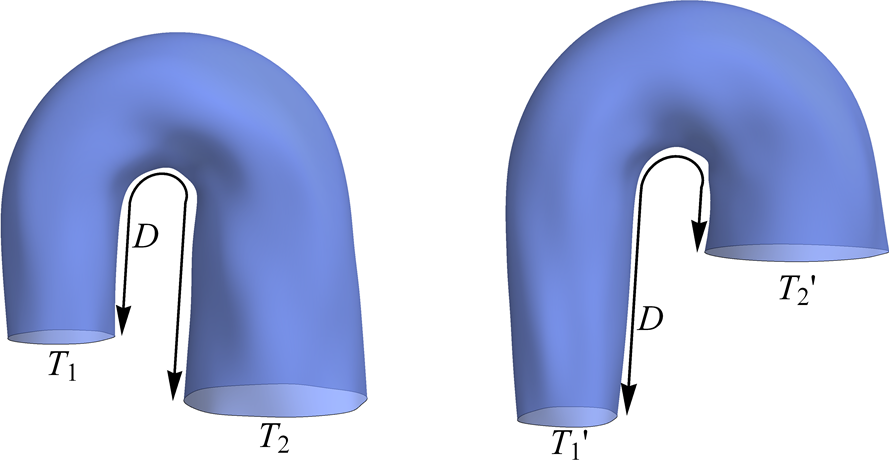}

(a)\hspace{4.5cm}(b)
\caption{The two-loop amplitudes $G_{\lambda,\g_s}(L_1,L_2;T_1,T_2;D)$ .}%
\label{fig:twoloop}
\end{figure}

A similar identity can be derived for Euclidean two-dimensional gravity \cite{aoki_operator_1996}, in which case it can be understood as a consistency relation for the continuum amplitudes.
From the discrete point of view, e.g. in terms of the quadrangulations discussed previously, it is clear that such an identity must hold since for $\g=1$ the contribution of each quadrangulations is independent of the labeling.
In the following we will show that even for $\g \neq 1$ the discrete two-loop amplitude $\mathcal{G}_{\g}(l_1,l_2;t_1,t_2;d)$, which is the discrete analogue of $G_{\lambda,\g_s}(L_1,L_2;T_1,T_2;D)$, is independent of $t_1$ and $t_2$. 
In fact, we will show that for any two pairs $(t_1,t_2)$ and $(t_1',t_2')$ there exists a bijection from the set of quadrangulations with two boundaries onto itself, such that a quadrangulation with a number of local maxima with respect to the pair $(t_1,t_2)$ is mapped to a quadrangulation with the same number of local maxima with respect to the pair $(t_1',t_2')$.
The continuum identity (\ref{eq:twoloopidentity}) and its discrete analogue are a direct consequence of the existence of such a bijection.

Before we consider quadrangulations with two boundaries, let us consider the set $\mathcal{Q}_d$ of rooted quadrangulations $Q$ of the sphere with two marked vertices $v_1$ and $v_2$ separated by a distance $d$ along the edges of $Q$.
Given a pair of integers $(t_1,t_2)$ satisfying $|t_1-t_2| < d$ and $t_1+t_2+d$ even, one can label the vertices of the quadrangulation in a unique way such that the only local minima occur at the vertices $v_i$, which are labeled $t_i$, and the labels vary by exactly one along the edges.
The labeling gives the distance to $v_1$ shifted by $t_1$ or the distance to $v_2$ shifted by $t_2$, depending on which one is smaller (see \cite{miermont_tessellations_2009} for a similar construction).
According to (the sign-flipped version of) theorem \ref{thm:genbijection} the rooted labeled quadrangulation is mapped by $\Psi$ to a rooted labeled planar map for which the only two local minimal are again $v_i$ with label $t_i$.
As for the quadrangulation the labeling on the planar map is the distance to $v_1$ shifted by $t_1$ or the distance to $v_2$ shifted by $t_2$, depending on which one is smaller.
Hence, one obtains a rooted planar map $\Phi^d_{t_1,t_2}(Q)$ with two marked vertices, of which the labeling is canonical.

\begin{theorem}
For any pair of integers $(t_1,t_2)$ satisfying $d-|t_1-t_2|\in\{2,4,6,\ldots\}$ the map
$\Phi^d_{t_1,t_2}: \mathcal{Q}_d \to \mathcal{M}_d \cup \mathcal{M}_{d-1}$,
where $\mathcal{M}_d$ is the set of rooted planar maps with two marked vertices separated by a distance $d$, is a bijection.
\end{theorem}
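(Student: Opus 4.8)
The plan is to realize $\Phi^d_{t_1,t_2}$ as a restriction of the bijection $\Psi$ of theorem \ref{thm:genbijection} (in its sign-flipped form) and to reduce the whole statement to a comparison of the graph distance $d_Q(v_1,v_2)$ in the quadrangulation with the graph distance $d_M(v_1,v_2)$ in the image map $M=\Phi^d_{t_1,t_2}(Q)$. First I would record that, for a labeled quadrangulation whose labels differ by exactly one along edges and whose only local minima are $v_1,v_2$ (labeled $t_1,t_2$), the descending-path argument of lemma \ref{thm:dist} forces the labeling to be the two-point function $\ell(v)=\min(d_Q(v,v_1)+t_1,\,d_Q(v,v_2)+t_2)$; the same argument applies on the map side (where labels differ by \emph{at most} one and, by property (iii) of theorem \ref{thm:genbijection}, $v_1,v_2$ remain the only local minima) with $d_Q$ replaced by $d_M$. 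Thus on both sides the canonical labeling is determined by the marked vertices and the shifts, so $\Phi^d_{t_1,t_2}$ amounts to: impose the canonical labeling, apply $\Psi$, forget the labeling. Since $\Psi$ is a bijection with explicit inverse $\Xi$, injectivity is immediate and the only remaining point is to identify which values of $d_M(v_1,v_2)$ occur in the image.

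The reduction rests on two facts. The first is a parity constraint: a quadrangulation is bipartite with labels changing by exactly one along each edge, so $\ell(u)-\ell(v)\equiv d_Q(u,v)\pmod 2$; taking $u=v_1,v=v_2$ gives $d_Q(v_1,v_2)\equiv t_1-t_2\pmod 2$. The second is the two-sided estimate
\begin{equation*}
d_M(v_1,v_2)\ \le\ d_Q(v_1,v_2)\ \le\ d_M(v_1,v_2)+1 .
\end{equation*}
Granting these, the theorem follows cleanly. Forward: $d_Q(v_1,v_2)=d$ yields $d-1\le d_M(v_1,v_2)\le d$, so $\Phi^d_{t_1,t_2}(Q)\in\mathcal{M}_{d}\cup\mathcal{M}_{d-1}$. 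Backward: if $M\in\mathcal{M}_{d}\cup\mathcal{M}_{d-1}$ then the canonical labeling is well defined (since $d_M(v_1,v_2)\ge d-1>|t_1-t_2|$, using $d-|t_1-t_2|\ge 2$) and $\Xi(M)$ is a quadrangulation $Q$ with $d_Q(v_1,v_2)\in\{d_M,d_M+1\}\subseteq\{d-1,d,d+1\}$; the parity constraint $d_Q(v_1,v_2)\equiv t_1-t_2\equiv d\pmod 2$ then forces $d_Q(v_1,v_2)=d$, i.e. $Q\in\mathcal{Q}_d$. As the forward and backward assignments are restrictions of the mutually inverse $\Psi$ and $\Xi$, this exhibits the asserted bijection.

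For the easy inequality $d_Q(v_1,v_2)\le d_M(v_1,v_2)+1$ I would use the explicit structure of $\Xi$ from the proof of theorem \ref{thm:genbijection}: each edge of $M$ is drawn inside a single quadrangle of $\Xi(M)$, and the three cases show that an edge whose endpoints carry equal labels is the diagonal of a confluent quadrangle (endpoints at $Q$-distance $\le 2$), while an edge whose endpoints differ by one is a side of a simple quadrangle (endpoints $Q$-adjacent). Along any $M$-geodesic from $v_1$ to $v_2$ the labels form the tent $\min(i+t_1,(d_M-i)+t_2)$, which is strictly increasing then strictly decreasing and hence contains at most one flat step; the geodesic therefore uses at most one equal-label edge, and concatenating the corresponding quadrangle routes produces a walk in $Q$ of length at most $d_M(v_1,v_2)+1$.

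The main obstacle will be the reverse inequality $d_M(v_1,v_2)\le d$, equivalently producing a walk in $M$ from $v_1$ to $v_2$ of length at most $d$. Because the labeling on $M$ is the two-point distance with $v_1,v_2$ as its only local minima, every non-minimal vertex has a descending edge, so it suffices to exhibit a \emph{tie vertex} $w$, i.e.\ one with $d_M(w,v_1)+t_1=d_M(w,v_2)+t_2=\ell(w)$ and $\ell(w)\le (d+t_1+t_2)/2$: concatenating descending paths from $w$ to $v_1$ and to $v_2$ (which exist since both distance functions are realized at a tie) then gives a walk of length $2\ell(w)-t_1-t_2\le d$. The difficulty is that the natural candidate, a maximal-label vertex on a $Q$-geodesic, may be a local maximum of the labeling and hence absent from $M$, and even if present it need not realize both distances inside $M$; a priori the two distance functions could "cross" only along a top-level equal-label edge, which would yield merely $d+1$. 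Excluding this and guaranteeing a low-enough tie vertex is a confluence-of-geodesics statement, and I expect to establish it by the analysis used for the two-point and multi-point versions of Schaeffer's prescription by Miermont \cite{miermont_tessellations_2009} and Chassaing--Schaeffer \cite{chassaing_random_2004}, tracking how geodesics toward the two sources merge inside the map produced by the bijection.
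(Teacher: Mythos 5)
Your overall skeleton is sound, and the parts you do prove are correct, in places by routes genuinely different from the paper's. The identification of the labeling on $M=\Phi^d_{t_1,t_2}(Q)$ as $\min(d_M(\cdot,v_1)+t_1,\,d_M(\cdot,v_2)+t_2)$ is right; your ``tent'' argument giving $d_Q\le d_M+1$ (at most one equal-label edge along an $M$-geodesic, such an edge being a confluent diagonal, all others being edges of $Q$) is a valid alternative to the paper's lower bound, which instead argues that any $M$-path from $v_1$ to $v_2$ must pass a type-1 and a type-2 vertex labeled $t_{\mathrm{max}}-1$ and hence has length at least $d-1$; and your surjectivity argument via $\Xi$ plus parity replaces the paper's bookkeeping with the disjoint unions $\mathcal{Q}_{t_1,t_2}\leftrightarrow\bigcup_k\mathcal{Q}_{|t_1-t_2|+2k}$ and $\mathcal{M}_{t_1,t_2}\leftrightarrow\bigcup_k\bigl(\mathcal{M}_{|t_1-t_2|+2k-1}\cup\mathcal{M}_{|t_1-t_2|+2k}\bigr)$, by which the paper gets surjectivity for free from the forward inclusion alone.

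The genuine gap is the one you flagged yourself: you never prove $d_M(v_1,v_2)\le d$, you only ``expect to establish it'' by appeal to Miermont and Chassaing--Schaeffer. That inequality is the entire combinatorial content of the theorem, and both halves of your argument depend on it: the forward inclusion needs it to place $\Phi^d_{t_1,t_2}(Q)$ in $\mathcal{M}_d\cup\mathcal{M}_{d-1}$, and the backward direction needs $d_Q\ge d_M$ to confine $d_Q$ to $\{d-1,d,d+1\}$ before parity can act. The paper closes this with a short local analysis that your sketch lacks. Take a vertex $v_{\mathrm{max}}$ with $d(v_{\mathrm{max}},v_1)+t_1=d(v_{\mathrm{max}},v_2)+t_2=t_{\mathrm{max}}:=(t_1+t_2+d)/2$, which exists as the top of any $Q$-geodesic from $v_1$ to $v_2$, and inspect the cyclic sequence of its neighbours in $Q$: each is of type 0 with label $t_{\mathrm{max}}+1$, or of type 1 or type 2 with label $t_{\mathrm{max}}-1$, and both types 1 and 2 occur. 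Either some type-1 and type-2 neighbours are cyclically adjacent, in which case they are opposite corners of a common face which must be confluent (a fourth corner labeled $t_{\mathrm{max}}-2$ would force both neighbours to the same type), so they are joined by an edge of $M$, giving an $M$-path of length $d-1$; or every type-1/type-2 transition passes through a type-0 neighbour, in which case $v_{\mathrm{max}}$ is itself a vertex of $M$ and is joined by map edges to a vertex of each type, giving an $M$-path of length $d$. Concatenating with the descending paths you already have completes the bound. Your worry that the maximal-label vertex ``may be a local maximum and hence absent from $M$'' is precisely the dichotomy this case analysis resolves; pointing at the literature in place of this argument leaves the theorem unproved.
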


\begin{proof}
It follows from (the sign-flipped version of) theorem \ref{thm:genbijection} that the set $\mathcal{Q}_{t_1,t_2}$ of rooted labeled quadrangulations with two local minima labeled $t_1$ and $t_2$ is in bijection with the set $\mathcal{M}_{t_1,t_2}$ of rooted labeled planar maps again with two local minima labeled $t_1$ and $t_2$.
Any $Q\in\mathcal{Q}_{t_1,t_2}$ has a distance $d$ between the local minima satisfying $d = |t_1-t_2| + 2 k$ for some $k \geq 1$, because of the bipartiteness of $Q$. 
This means that we have a natural bijection 
\begin{equation}\label{eq:tbij0}
\mathcal{Q}_{t_1,t_2} \longleftrightarrow \bigcup_{k=1}^{\infty} \mathcal{Q}_{d=|t_1-t_2| + 2 k},
\end{equation}
given by the canonical labeling discussed above.
Likewise, any $M\in\mathcal{M}_{t_1,t_2}$ has a distance $d = |t_1-t_2|+m$ for some $m\geq 1$ between its local minima.
Therefore we have a natural bijection
\begin{equation}\label{eq:tbij1}
\mathcal{M}_{t_1,t_2} \longleftrightarrow \bigcup_{m=1}^{\infty} \mathcal{M}_{d=|t_1-t_2| + m} = \bigcup_{k=1}^{\infty} (\mathcal{M}_{d=|t_1-t_2| + 2k-1}\cup \mathcal{M}_{d=|t_1-t_2| + 2k}).
\end{equation}
To show that $\Phi^d_{t_1,t_2}$ defines a bijection between the individual sets $\mathcal{Q}_{d=|t_1-t_2| + 2 k}$ and $\mathcal{M}_{d=|t_1-t_2| + 2k-1}\cup \mathcal{M}_{d=|t_1-t_2| + 2k}$ appearing on the right-hand side of (\ref{eq:tbij0}) and (\ref{eq:tbij1}) it is sufficient to check that $\Phi^d_{t_1,t_2}(\mathcal{Q}_d) \subset \mathcal{M}_d \cup \mathcal{M}_{d-1}$.

Given a rooted quadrangulation $Q\in\mathcal{Q}_d$ with two marked vertices separated by a distance $d$ (see figure \ref{fig:twopointquad}a for an example with $d=4$).
We will show that $v_1$ and $v_2$ are separated by $d$ or $d-1$ edges in the planar map $M=\Phi^d_{t_1,t_2}(Q)$.
For convenience we assign a \emph{type} to vertices $v$ in the quadrangulation $Q$ according to their distances $d(v,v_i)$ to $v_1$ and $v_2$.
A vertex $v$ is of type 1 if $d(v,v_1)+t_1 < d(v,v_2)+t_2$, of type 2 if $d(v,v_1)+t_1 > d(v,v_2)+t_2$, and of type 0 in case of equality (see figure \ref{fig:twopointquad}b).
For $i=1,2$ a type-$i$ vertex $v$ labeled $t$ that is not a local maximum has a distance $t-t_i$ in the planar map to $v_i$.
Due to triangle inequalities all vertices labeled $t < t_{\mathrm{max}}:=(t_1+t_2+d)/2$ are either of type 1 or type 2.
Any path connecting $v_1$ and $v_2$ in the planar map must include a vertex labeled $t_{\mathrm{max}}-1$ of type 1 and one of type 2, and therefore its length is at least $d-1$.
Since the distance between $v_1$ and $v_2$ in the quadrangulation is $d$, there must exist at least one vertex $v_{\mathrm{max}}$ with $d(v_{\mathrm{max}},v_1)+t_1=d(v_{\mathrm{max}},v_2)+t_2=t_{\mathrm{max}}$, which is the vertex of maximal label on a geodesic connecting $v_1$ and $v_2$ (see figure \ref{fig:twopointquad}b).
Let us consider the cycle of neighbours of $v_{\mathrm{max}}$ in $Q$ in anticlockwise order.
They come in three types: vertices of type $0$ labeled $t_{\mathrm{max}}+1$, vertices of type 1, respectively of type 2, labeled $t_{\mathrm{max}}-1$.
Moreover, at least one vertex of both type 1 and type 2 must occur.
Now there are two possibilities for the cycle: either a type-1 vertex is adjacent to a type-2 vertex, or both a type-1 vertex and a type-2 vertex are followed by a type-0 vertex (examples of both situations are indicated by $v_{\text{max}}$ and $v_{\text{max}}'$ in figure \ref{fig:twopointquad}b).
In the first case the type-1 vertex and the type-2 vertex are opposite corners of a confluent face and are therefore connected by an edge in the planar map.
Hence, there is a path of length $d-1$ connecting $v_1$ and $v_2$.
In the second case $v_{\mathrm{max}}$ is connected by an edge in the planar map to both a type-1 vertex and a type-2 vertex, resulting in a path of length $d$.
Hence, $\Phi^d_{t_1,t_2}(\mathcal{Q}_d) \subset \mathcal{M}_d \cup \mathcal{M}_{d-1}$.
\end{proof}

\begin{figure}[t]
\centering
\includegraphics[height=6.2cm]{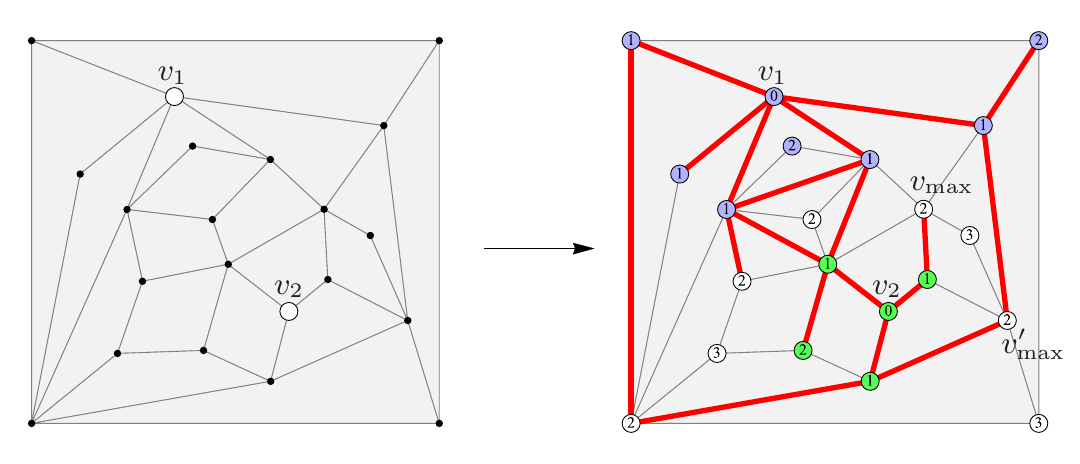}

(a)\hspace{7.4cm}(b)
\caption{(a) An example of a quadrangulation $Q$ with two marked vertices separated by a distance $d=4$. 
(b) The planar map resulting from the canonical labeling with $t_1=t_2=0$.
 The vertices of type 0, 1 and 2 are coloured white, blue and green, respectively.
 Both the vertices $v_{\text{max}}$ and $v_{\text{max}}'$ are of type 0 and have label equal to $t_{\mathrm{max}}=2$ (notice that there are three more vertices satisfying the same property).
 The vertex $v_{\text{max}}$ is adjacent to a face containing both a type-1 vertex and a type-2 vertex which are connected in the planar map by the diagonal, hence giving rise to a path of length $d-1$ connecting $v_1$ and $v_2$.
 The vertex $v_{\text{max}}'$ on the other hand has a type-1 and type-2 vertex both followed by a type-0 vertex in its anticlockwise sequence of neighbors.
 Therefore there is a path of length $d$ passing via $v_{\text{max}}'$.}%
\label{fig:twopointquad}
\end{figure}

From the bijection $\Phi^d_{t_1,t_2}$ we can easily construct the bijection
\begin{equation}\label{eq:bijectiont1t2}
{\Phi_{t_1',t_2'}^d}^{-1} \circ \Phi_{t_1,t_2}^d : \mathcal{Q}_d(N) \to \mathcal{Q}_d(N), \quad |t_1-t_2|,|t_1'-t_2'| \in \{d-2,d-4,\ldots\},
\end{equation}
which maps the set of rooted quadrangulations with two marked points separated by a distance $d$ to itself.
A quadrangulation with a number $N_{\mathrm{max}}$ of local maxima of the canonical labeling w.r.t. $(t_1,t_2)$ is mapped to a quadrangulation with $N_{\mathrm{max}}$ local maxima of the canonical labeling w.r.t. $(t_1',t_2')$.

The bijection (\ref{eq:bijectiont1t2}) can be extended to quadrangulations with two boundaries instead of two marked points by gluing disks to the boundaries as we did for the propagator in section \ref{sec:timedep}.
We consider quadrangulations $Q$ with two boundaries, one of length $2l_1$ labeled alternatingly by $t_1$ and $t_1+1$, and another of length $2l_2$ labeled $t_2$ and $t_2+1$.
We fix the smallest distance between the points labeled $t_1$ on the first boundary and the points labeled $t_2$ on the second boundary to be $d$, subject to the same inequalities as above, i.e. $d-|t_1-t_2|$ positive and even.
By gluing disks constructed from $l_i$ simple faces to the boundaries, a quadrangulation is obtained with two marked vertices $v_i$ labeled $t_i-1$ and separated by a distance $d+2$.
The bijection (\ref{eq:bijectiont1t2}) leaves invariant the structure of the quadrangulation in the direct neighbourhood of $v_i$, and therefore the disks can be removed again after the bijection to obtain a quadrangulation with different labels $(t_1',t_2')$ on its boundaries.

It follows immediately that the discrete two-loop amplitude $\mathcal{G}_{\g}(l_1,l_2;t_1,t_2;d)$, i.e. the generating function for such quadrangulations including a factor of $\g$ for each local maximum of the canonical labeling, is independent of $t_1$ and $t_2$ (as long as $d-|t_1-t_2|$ is positive and even).

\section{Triangulations}\label{sec:triangulations}

For the sake of completeness we will show that most constructions in this paper can also be carried out for triangulations.
As we will see, the analogues of the bijections described in section \ref{sec:bijections} are not as simple for triangulations but still manageable.
The bijection we will use is a special case of the Bouttier--Di Francesco--Guitter bijection between arbitrary planar maps and \emph{labeled mobiles} introduced in \cite{bouttier_planar_2004}.
In a slightly different formulation it was used by Le Gall in \cite{gall_uniqueness_2011} to prove that random triangulations and random quadrangulations as metric spaces converge in a quite general way to the same continuum object, known as the \emph{Brownian map}.

\begin{figure}[t]
\centering
\subfloat[]{
\centering
\includegraphics[height=5cm]{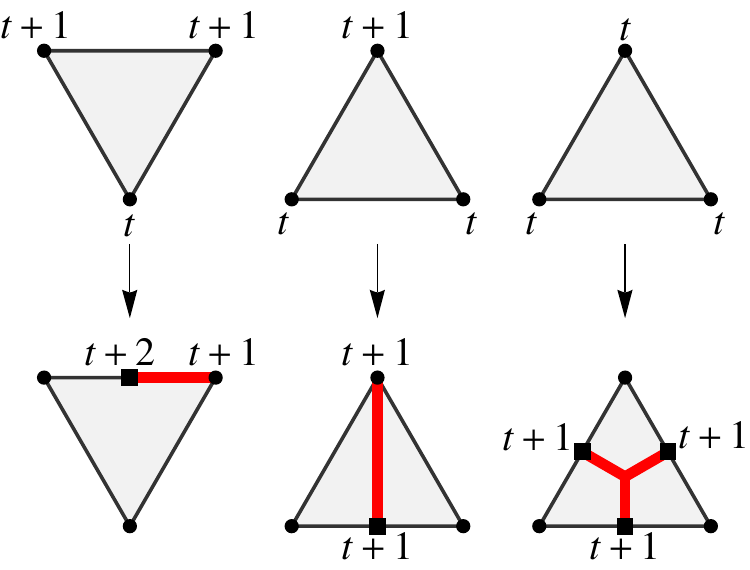}
\label{fig:trirules}
}
\subfloat[]{
\centering
\includegraphics[height=5.5cm]{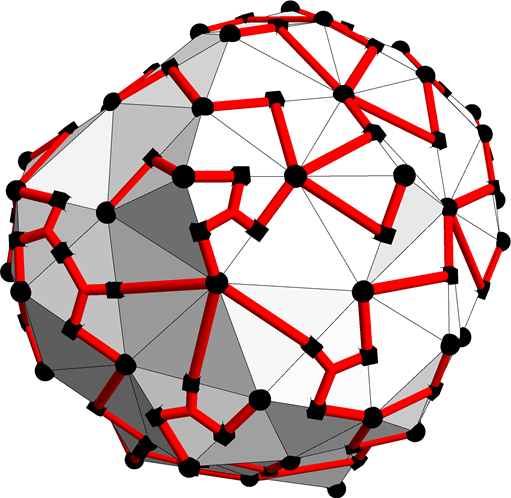}
\label{fig:tribijection}
}
\caption{(a) The rules for a ``down'' triangle, an ``up'' triangle, and a ``flat'' triangle respectively. (b) An example of the application of the rules to a triangulation of the sphere.}%
\label{fig:tribijections}
\end{figure}

The bijection for triangulations can be understood as a special case of the Cori--Vauquelin--Schaeffer bijection for quadrangulations. 
Given a triangulation of the sphere with $N$ triangles and one vertex marked as the origin, one can label all the vertices according to their distance to the origin along the edges.
A new vertex labeled $t+1$ is inserted in the middle of each edge connecting vertices of equal label $t$.
Each triangle with equal labels $t$, called a \emph{flat} triangle, is subdivided into three faces by connecting each of the three new vertices on its edges to a new vertex labeled $t+2$ in the center of the triangle.
The resulting planar map is a quadrangulation labeled by the distance to the origin.
It is convenient to keep track of the vertices belong to the triangulation (type 1), the vertices lying on the edges of the triangulation (type 2), and the vertices in the centers of the flat triangles (type 3).
The first two types are depicted in figure \ref{fig:tribijections} by disks and squares respectively, while the type-3 vertices correspond to the unmarked intersections.

The application of Schaeffer's prescription (figure \ref{fig:quad0}a) to the resulting quadrangulation turns out to be equivalent to the prescription in figure \ref{fig:trirules} for the triangles. 
Not all quadrangulations arise from triangulations, hence a limited class of labeled trees will appear.
Before discussing this class, it is convenient to switch to the rooted versions of the triangulations and trees, like in section \ref{sec:schaeffer}.
A triangulation is rooted by distinguishing an edge, which for simplicity we demand to connect vertices of different label.\footnote{See \cite{gall_uniqueness_2011} for the general case where any oriented edge can be used as root.}
Since this edge appears in the quadrangulation with a vertex of type 1 at its end point with largest label, the tree is naturally rooted at this vertex.
From the way the triangles in figure \ref{fig:trirules} can be glued, it can be seen that the class of rooted, labeled trees satisfies the following rules (see \cite{gall_uniqueness_2011}). 
The root is of type 1.
A vertex of type 1 labeled $t$ has zero or more children of type 2 labeled $t$ or $t+1$.
A vertex of type 2 labeled $t$ has either one child of type 1 labeled $t$ or $t-1$, or two children of type 2 labeled $t$.
Notice that in the latter case we regard the two type-2 vertices of a flat triangle to be directly connected to the other type-2 vertex, instead of keeping track of the type-3 vertex in between.  

\begin{figure}[t]
\centering
\subfloat[]{
\centering
\includegraphics[width=4.5cm]{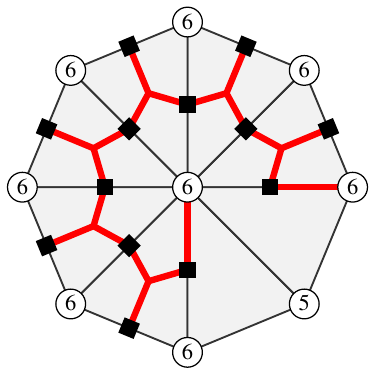}
\label{fig:trimax1}
}
\subfloat[]{
\centering
\includegraphics[width=4.5cm]{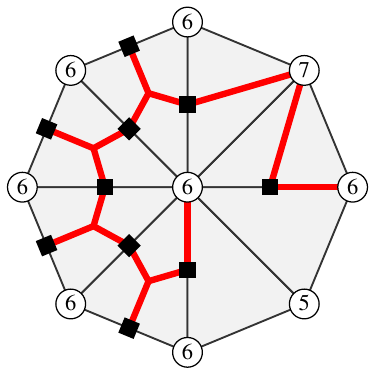}
\label{fig:trimax2}
}
\caption{An example of a vertex that is (a) a local maximum and (b) one that is not.}%
\label{fig:trimax}
\end{figure}

One would like to identify vertices that are local maxima of the labeling, which in this case means that they have label greater than or equal to their neighbours in the triangulation.
In the case of quadrangulations such property of a vertex could be established locally from the labeled tree just by considering the labels of tree edges incident to it.
Unfortunately this is not the case for triangulation, as can be seen from the examples in figure \ref{fig:trimax}.
Whether or not a vertex is a local maximum may depend on the structure in the tree an arbitrary distance away from that vertex.
Assigning couplings to local maxima is an inherently non-local procedure from the point of view of the labeled trees, and therefore quite impractical to treat analytically.

Instead, we will show that one can get a similar model by assigning couplings to saddle-point configurations in the triangulation.
In fact, we will introduce two couplings, $\g_1$ and $\g_2$, related to two different ways in which saddle points can occur. 
We assign a coupling $\g_1$ to each type-1 vertex labeled $t$ that is connected to more than one type-2 vertex labeled $t+1$.
In the triangulation this corresponds to a vertex labeled $t$ for which the labels of its neighbours run from $t$ to $t-1$ and back to $t$ at least twice when walking around the vertex.
We also assign a coupling $\g_2$ to each type-2 vertex labeled $t$ that is not connected to a type-1 vertex labeled $t-1$.
This corresponds to an edge labeled $(t,t)$ which is not shared by a \emph{down} triangle (see figure \ref{fig:trirules}).

Both these configurations are independent and correspond to configurations that are absent in causal triangulations.
If we set both couplings to zero and consider rooted, labeled trees for which all edges connecting to the root have constant label, we get exactly causal triangulations rooted at the ``top'', as in section \ref{sec:causal}.
Therefore it makes sense to search for generalized CDT in the continuum limit by scaling the couplings $\g_1$ and $\g_2$ to zero.

Let us introduce the generating functions $z_{i,\sigma}(g)$ for trees with a vertex of type $i=1,2$ at the root. 
The label $\sigma=-,0,+$ indicates whether the tree can be appear as a subtree of a larger tree with an edge of label $\sigma$ pointing towards the root of the subtree.
As can be deduced from the prescription in figure \ref{fig:trirules}, only four of these are occur, namely $z_{1,-}$, $z_{1,0}$, $z_{2,0}$, and $z_{2,+}$.
They satisfy the recurrence relations
\begin{align}
&z_{1,-} = \frac{1-\g_1}{1-g z_{2,0}} + \frac{\g_1}{1-g z_{2,+}-g z_{2,0}}, \\
&z_{1,0} =  \frac{1-\g_1}{1-g z_{2,0}}+g z_{2,+}\frac{1-\g_1}{(1-g z_{2,0})^2}+\frac{\g_1}{1-g z_{2,+}-g z_{2,0}}, \\
&z_{2,0} = g \left(z_{1,-} + \g_2(z_{1,0}+ z_{2,0}^2)\right), \label{eq:z2zero}\\
&z_{2,+} = g \left(z_{1,-} + z_{1,0} + z_{2,0}^2\right). \label{eq:z2plus}
\end{align}
If we scale $g=1/2(1-\lambda \epsilon^2/2)$ and $\g_i = \g_{s,i}\epsilon^3$ we find that the generating functions scale as $z_{1,-}=2(1-Z_{1,-}\epsilon)$, $z_{1,0} = Z_{1,0}/\epsilon$, $z_{2,0} = 1-Z_{2,0}\epsilon$, and $z_{2,+} = Z_{2,+}/\epsilon$, satisfying
\begin{align}
&Z_{1,0} = 2 Z_{2,+} = \frac{5}{2Z_{2,0}}, \quad Z_{1,-} = Z_{2,0}\\
&Z_{2,0}^3 - \lambda Z_{2,0} + \frac{5}{4} \g_{s,2} = 0 \label{eq:z2zerocont}.
\end{align}
Interestingly, only the coupling $\g_2$ survives in the continuum limit and produces the generalized CDT coupling $\g_s$ up to a factor of $5/4$.\footnote{This slightly awkward factor of $5/4$ can be seen to be due to the presence of flat triangles for small but non-zero $\g_{s,2}$. If one assigns yet another coupling $\g_3$ to each flat triangle, which amounts to inserting $\g_3$ in front of the $z_{2,0}^2$-terms in (\ref{eq:z2zero}) and (\ref{eq:z2plus}), and one scales $\g_3$ to zero at least linearly in $\epsilon$, the factor of $5/4$ will disappear and we get exactly $\g_{s,2}=\g_s$.}

The cup function $w(g,y)$ with a constant distance between the boundary and the origin is obtained simply by combining independent trees generated by $z_{2,0}$,
\begin{equation}
w(g,y) = \sum_{l=0}^{\infty} z_{2,0}(g)^l y^l = \frac{1}{1-y\, z_{2,0}(g)}.
\end{equation}
Taking the continuum limit of this expression, we arrive at the same result as for quadrangulations (\ref{eq:gencdtcup}) with $Z_1$ replaced by $Z_{2,0}$, which obeys the same equation (\ref{eq:z2zerocont}).

\section{Discussion and conclusions}\label{sec:discussion}

The model of two-dimensional quantum gravity denoted generalized 
causal dynamical triangulations was originally introduced as 
a continuum model of geometries. The equations which determined
the disk function and the two-loop function were found from
simple consistency relations which had to be satisfied
for the ensemble of geometries in question. Only afterwards an actual 
realization of this ensemble in terms of discrete triangulations was studied, 
from which a scaling limit could be found by taking the link length
to zero.
The purpose of this article has been to present another discrete realization of the model, in terms of labeled quadrangulations, which could be studied in more detail and be solved already largely at the discrete level.
Let us summarize some of the main results reported in this article:
\begin{itemize}
\item[1)] There exists a bijection $\Phi$ from the set of rooted pointed quadrangulations of the sphere with $N$ faces to the set of rooted pointed planar maps with $N$ edges such that the distance labeling of a quadrangulation $Q$ is mapped to the distance labeling of $\Phi(Q)$, and such that if $Q$ has $n$ local maxima with respect to the distance labeling then $\Phi(Q)$ has $n$ faces.    
 
\item[2)] An explicit generating function has been obtained for the number of rooted pointed quadrangulations of the sphere which have $N$ faces and $n$ local maxima with respect to the distance labeling and where the root is a distance $t$ from the origin.
This is also the generating function for the number of rooted pointed planar maps with $N$ edges and $n$ faces and where the root is a distance $t$ from the origin. 

\item[3)] We have shown that the generating functions for the ensemble of planar maps possess a well-defined scaling limit as $N\to\infty$, while keeping $n$ fixed.
In this limit one obtains precisely the amplitudes of the generalized CDT model of two-dimensional quantum gravity, which have been established in the physics literature.
Moreover, we have found a new explicit formula for the two-point function of generalized CDT.

\item[4)] We have shown that a loop identity first discovered by Kawai et al. in the continuum limit of DT, and seemingly related to the Virasoro algebra of the corresponding continuum quantum Liouville theory, is valid even at the combinatorial level of generalized CDT.
This loop identity is a consequence of the existence of a range of bijections between quadrangulations with two marked points and planar maps with two marked points.
\end{itemize}  

\begin{figure}[t]
\centering
\includegraphics[width=14cm]{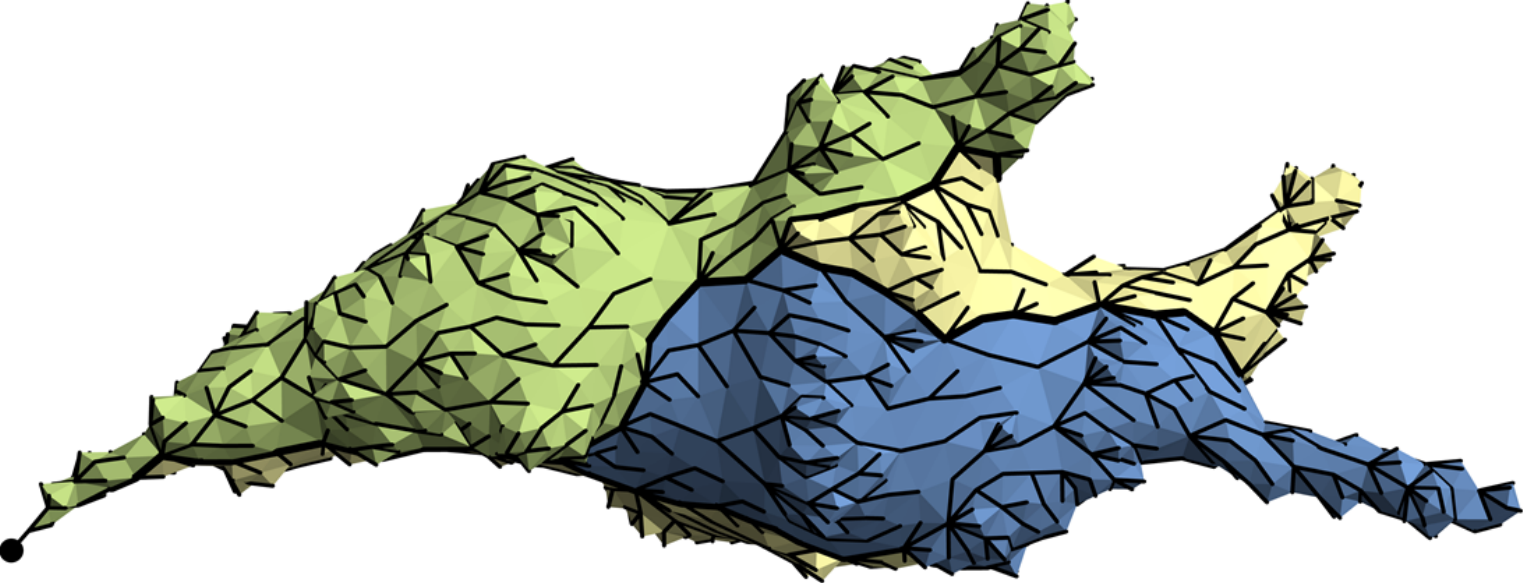}
\caption{A random pointed quadrangulation with the origin indicated by a black dot and three local maxima. The associated planar map drawn on top of the quadrangulation has three faces, as indicated by the colouring, of which the interiors may be viewed as baby universes. }%
\label{fig:gencdtmap}
\end{figure}

Let us make some remarks on the physical interpretation of generalized CDT as 2d gravity with spatial topology change.
In the two-dimensional CDT model ``space'' has the topology of $S^1$
and a two-dimensional CDT surface with the topology of a cylinder
allows a ``time'' foliation. In generalized CDT one allows the space 
with topology $S^1$ to split into several $S^1$'s as a function of time.
Each of these $S^1$'s develops independently and its splitting off can be viewed as the creation of a baby universe. As a function of time (or distance) 
we allow a finite number of these baby universes to be created
and eventually to vanish
again (``decay into the vacuum'' in physics jargon). 
Each spacetime point has a distance to the initial spatial $S^1$
and this distance has a local maximum where a baby universe vanishes.
In this way we were led to study
quadrangulations with a fixed number of local maxima of the distance 
with respect to the initial spatial $S^1$.
The outcome of our analysis is that this is combinatorially equivalent to 
the study of planar maps with a fixed number of faces 
and the generalized CDT model was obtained as a specific 
scaling limit of these planar maps.
In the planar map representation a baby universe is represented as a face and the 
spacetime volume of the baby universe is proportional to the
degree of the face (see figure \ref{fig:gencdtmap} for an example). 
In fact, inspection of the construction of the planar map shows that a face of degree $d$ covers exactly $d/2$ quadrangles.

This model may be compared with \cite{le_gall_scaling_2011} (see also \cite{janson_scaling_2012}), 
where random planar maps are studied with non-trivial weights on the 
degrees of the faces. By choosing different asymptotic laws for these weights, 
it was shown that different continuum limits are obtained 
with Hausdorff dimensions anywhere between 2 and 4.
It would be interesting to see whether putting non-trivial (hence non-local) weights 
on the volumes of baby universes leads to continuum limits 
that in a similar fashion continuously interpolate between DT and CDT, and if 
such weights can be given an interpretation in terms of 
continuum physics.

One should keep in mind, however, that the geometry studied in \cite{le_gall_scaling_2011} is the intrinsic geometry of the planar maps, which differs from the geometry of the associated quadrangulations.
Indeed, only geodesic distances to the origin are preserved under the bijection.
General methods to study distances between arbitrary points in generalized CDT, or even in ordinary CDT, are currently lacking.
This means that we have not yet arrived at a complete understanding of the two-dimensional continuum geometry of generalized CDT.

Finally, let us mention that recently it has been shown that one can obtain new multi-critical scaling relations in the generalized CDT ensemble of graphs if one 
combines triangles with quadrangles with negative
weights (\cite{ambjorn_new_2012} and \cite{atkin_analytical_2012,atkin_quantum_2012}). This is very similar 
to the now ``classical'' situation for the DT ensemble 
where a similar combination of weights for triangles and quadrangles
allowed one to obtain a  scaling limit different from the 
standard DT limit  as well as a different distance 
function (\cite{gubser_scaling_1994} and \cite{bouttier_geodesic_2003}). 
It should be possible to apply the techniques in \cite{bouttier_geodesic_2003}
to the generalized CDT ensemble and obtain the corresponding distance functions.  

\subsection*{Acknowledgments}

The authors acknowledge support from the ERC-Advance grant 291092,
``Exploring the Quantum Universe'' (EQU). JA acknowledges support
of FNU, the Free Danish Research Council, from the grant
``quantum gravity and the role of black holes''.        
  
  
\bibliographystyle{habbrv}
\bibliography{trees}

\end{document}